\newtheorem{lemma}{Lemma}
\newtheorem{claim}{Claim}
\newtheorem{theorem}{Theorem}
\newtheorem{definition}{Definition}
\newtheorem{proposition}{Proposition}
\newcommand{\points}{P}
\newcommand{\dist}{d}
\newcommand{\clients}{C}
\newcommand{\facilities}{\mathcal{F}}
\newcommand{\solution}{F}
\newcommand{\assign}{\sigma}
\newcommand{\capacity}[1]{u_{#1}}
\newcommand{\outliers}{m}
\newcommand{\cost}{\mathsf{cost}}
\newcommand{\wt}{w}
\newcommand{\wtset}{W}
\newcommand{\wcost}{\mathsf{wcost}}
\newcommand{\dummy}{d_f}
\newcommand{\opt}{\mathsf{OPT}}
\newcommand{\rad}{R}
\newcommand{\ceil}[1]{\lceil #1\rceil}
\newcommand{\ball}{\textsf{Ball}}
\newcommand{\ringf}[1]{\clients_{f,#1}}
\newcommand{\rings}[2]{\clients_{#1,#2}}
\newcommand{\clusterf}{\clients_f}
\newcommand{\ring}{\clients_j}
\newcommand{\radr}{\rad_j}
\newcommand{\vectorx}{\mathbb{X}}
\newcommand{\vectory}{\mathbb{Y}}
\newcommand{\samples}[2]{S_{#1,#2}}
\newcommand{\func}{g}
\newcommand{\ex}{\mathbb{E}}
\newcommand{\flow}{\mathsf{FI}}
\newcommand{\out}{\clients^o}
\newcommand{\samplef}{S^f}
\newcommand{\flowS}{\phi'}
\newcommand{\lr}[1]{\left(#1\right)}
\newcommand{\LR}[1]{\left\{#1\right\}}
\newcommand{\red}[1]{{\color{red}#1}}
\newcommand{\poly}{\mathsf{poly}}
\newcommand{\nat}{\mathbb{N}}
\newcommand{\vecz}{\mathbbm{z}}
\newcommand{\vecv}{\mathbbm{v}}
\newcommand{\vecone}{\mathbbm{1}}
\newcommand{\I}{I}
\newcommand{\cI}{I_\outliers}
\newcommand{\IwO}{I_{0}}
\newcommand{\wtpoints}{S}
\newcommand{\wtout}{T}
\newcommand{\largedist}{D}
\newcommand{\order}{\sigma}
\newcommand{\fl}{\phi}
\newcommand{\groups}{\ell}
\newcommand{\fairone}{\alpha}
\newcommand{\fairtwo}{\beta}
\newcommand{\wckm}{\textsf{WC$k$M}\xspace}
\newcommand{\wckmo}{\textsf{WC$k$MO}\xspace}
\newcommand{\mcf}{\textsf{MCF}\xspace}
\newcommand{\mcfo}{\textsf{MCFO}\xspace}
\newcommand{\kmed}{$k$-\textsc{Median}\xspace}
\newcommand{\kmeans}{$k$-\textsc{Means}\xspace}
\newcommand{\kcenter}{$k$-\textsc{Center}\xspace}
\newcommand{\facloc}{\textsc{Facility Location}\xspace}
\newcommand{\NP}{\textsf{NP}\xspace}
\newcommand{\ckm}{\textsf{C$k$M}\xspace}
\newcommand{\ckmo}{\textsf{C$k$MO}\xspace}
\newcommand{\kmo}{\textsf{$k$MO}\xspace}
\newcommand{\factor}{\gamma}
\newcommand{\consta}{\lambda_1}
\newcommand{\constb}{\lambda_2}
\newcommand{\collection}{\mathcal{M}}
\newcommand{\wfao}{WFAO\xspace}
\newcommand{\constr}{$\mathscr{C}$\xspace}
\newcommand{\factortwo}{\zeta}
\title{FPT Approximations for Capacitated/Fair Clustering with Outliers} 
\author{Rajni Dabas \thanks{Department of Computer Science, University of Delhi, New Delhi, India. Email: \texttt{rajni@cs.du.ac.in}.} \and Neelima Gupta \thanks{Department of Computer Science, University of Delhi, New Delhi, India. Email: \texttt{ngupta@cs.du.ac.in}.} \and Tanmay Inamdar \thanks{Department of Informatics, University of Bergen, Bergen, Norway. Email: \texttt{Tanmay.Inamdar@uib.no.}\ \ The research leading to these results has received funding the European Research Council (ERC) via grant LOPPRE, reference 819416.}}
\date{}
\begin{document}
\maketitle

\begin{abstract}
Clustering problems such as $k$-\textsc{Median}, and $k$-\textsc{Means}, are motivated from applications such as location planning, unsupervised learning among others. In many such applications, it is important to find the clustering of points that is not ``skewed'' in terms of the number of points, i.e., no cluster should contain \emph{too many} points. This is often modeled by introducing \emph{capacity constraints} on the sizes of clusters. In an orthogonal direction, another important consideration in the domain of clustering is how to handle the presence of \emph{outliers} in the data. Indeed, the aforementioned clustering problems have been generalized in the literature to separately handle capacity constraints and outliers. However, to the best of our knowledge, there has been very little work on studying the approximability of clustering problems that can simultaneously handle capacity constraints as well as outliers.

We bridge this gap and initiate the study of the \textsc{Capacitated $k$-Median with Outliers} (\ckmo) problem. In this problem, we want to cluster all except $m$ \emph{outlier points} into at most $k$ clusters, such that (i) the clusters respect the capacity constraints, and (ii) the cost of clustering, defined as the sum of distances of each \emph{non-outlier} point to its assigned cluster-center, is minimized.

We design the first constant-factor approximation algorithms for \ckmo. In particular, our algorithm returns a $(3+\epsilon)$-approximation for \ckmo in general metric spaces, and a $(1+\epsilon)$-approximation in Euclidean spaces of constant dimension, that runs in time in time $f(k, m, \epsilon) \cdot |\cI|^{O(1)}$, where $|\cI|$ denotes the input size. We are also able to extend these results to obtain similar FPT approximations for a broader class of problems, including Capacitated $k$-Means/$k$-Facility Location with Outliers, and Size-Balanced Fair Clustering problems with Outliers. For each of these problems, we obtain an approximation ratio that matches the best known guarantee of the corresponding outlier-free problem.
\end{abstract}

%!TEX root = main.tex

\section{Introduction} \label{sec:intro}

\iffalse
In a typical clustering problem, we want to partition the given set of $n$ points into different clusters in order to minimize a cost function that captures how ``natural'' the resulting clusters are. \kmeans, \kmed, \kcenter, and the closely related \facloc problems are among the most well-studied clustering problems in the area of approximation algorithms. A variety of different techniques have been employed in order to design approximation algorithms for all of these problems with \emph{small} approximation ratios in general as well as special metric spaces \red{cite...}.

In many applications, it is desirable to find a clustering of points such that the clusters satisfy certain additional properties. For example, the data may contain a small number of anomalous points, called \emph{outliers}, that should be avoided, while clustering the points that are of interest. Alternatively, capacity constraints naturally model the scenarios where we want to find clusters, such that each cluster can only contain a limited number of points, called its \emph{capacity}. This is motivated from applications in location planning scenarios, e.g., where the set of points clustered around a facility represents a subset of population \todo{or clients?} that receive a certain service from the facility, such as a hospital or a school, which is restricted by the available infrastructure. 
\fi

Clustering problems such as \kmed, \kmeans, \kcenter, and \facloc are among the most well-studied problems in the approximation algorithms literature. Although in general, these problems are \NP-hard, constant factor approximations are known for each of them in polynomial time \cite{GowdaPST23,AhmadianNSW20,hochbaum1982approximation,Li2011FL}. Furthermore, in several special settings, e.g., when the points belong to low-dimensional euclidean spaces, it is possible to obtain near-optimal approximation ratios~\cite{FriggstadRS19,DBLP:journals/siamcomp/Cohen-AddadKM19,Arora,AhmadianNSW20,Cohen-Addad20Doubling,Cohen-AddadSS21}.

In many applications, it is desirable to find a clustering of the given point set with certain additional properties. Consequently, there has been a considerable effort in studying different \emph{constrained} versions of aforementioned clustering problems. Notable examples include clustering to handle capacities, outliers, fault tolerance, fairness, knapsack and matroid constraints on centers, among others. Typically, handling each such constraint requires a different set of techniques, and there is a varying degree of success in obtaining good approximation guarantees for each of these problems~\cite{AnMP2015,Bansal,Anfocs2014,CharikarKMN01,KrishnaswamyLS18,Chierichetti0LV17,Chen2016Matroidknapsack}. However, it is quite natural to seek a solution that satisfies multiple constraints at the same time, which motivates the following question.

\begin{tcolorbox}
	Can we design approximation algorithms to handle multiple, orthogonal constraints in clustering problems such as \kmed? 
\end{tcolorbox}
In this paper, we focus on handling capacity constraints (resp.\ fairness) along with outliers in the context of \kmed, \kmeans and related objectives. Before discussing the problem studied in this paper, we build towards its definition by discussing several special cases thereof, that have been extensively studied in the literature. 

The general setup of \kmed, and its generalizations is as follows. We are given an instance $\I = ((\points, \dist), \clients, \facilities, k)$, where $(\points, \dist)$ is a metric space, $\clients \subseteq \points$ is a set of $n$ \emph{clients}, $\facilities \subseteq \points$ is a set of (candidate) facilities, and $k$ is a positive integer. In \kmed, we want to find a set of facilities $\solution \subseteq \facilities$ of size $k$, minimizing the objective $\sum_{c \in \clients} \min_{f \in \solution} \dist(c, f)$. There is a long series of approximation algorithms for \kmed, culminating in $2.613$-approximation in general metrics in polynomial time \cite{GowdaPST23}; whereas approximation schemes are known in low-dimensional euclidean and doubling spaces, and metrics induced by minor-free graph classes \cite{DBLP:journals/siamcomp/Cohen-AddadKM19,FriggstadRS19,Cohen-AddadSS21}. 
%An optimal approximation ratio of $1+\frac{2}{e}+\epsilon$ can be achieved using a \emph{fixed-parameter tractable} (FPT) algorithm parameterized by $k$, i.e., an algorithm with the running time $f(k, \epsilon) \cdot n^{O(1)}$.
\textsc{$k$-Median with Outliers} (\kmo) is a generalization of \kmed, where we are also given a non-negative integer $m$, denoting the number of outliers, whose contribution of distances can be excluded from the objective. In this case, the minimization is over all possible choices of $m$-subsets for outliers, and $k$-subsets for the facilities. Constant approximations are known for \kmo in general metrics via iterative rounding \cite{KrishnaswamyLS18,GuptaMZ21}. 
%In a recent result, Agrawal et al.~\cite{agrawal2023kMO} design an approximation-preserving reduction from \textsc{$k$-Median with Outliers} to \kmed that runs in time FPT in $k$ and $m$.

\textsc{Capacitated $k$-Median} (\ckm) introduces capacity constraints into the vanilla \textsc{$k$-Median} problem. Here, for each candidate facility $f \in \facilities$, we are given a non-negative integer $u_f$ that denotes its capacity. If we decide to include $f$ in our solution of size $k$, we can assign at most $u_f$ clients to $f$. The goal is to find a solution $\solution \subseteq \facilities$ of size $k$, and a capacity-respecting assignment, such that the sum of distances of each client to its assigned facility, is minimized. The presence of capacity constraints makes the problem significantly harder -- indeed obtaining a constant-factor approximation for \ckm in general metric spaces, has been a long-standing open question in the area of approximation algorithms. There is a partial progress toward this goal by designing so-called \emph{bi-criteria} approximations that approximate the cost up to a constant factor, while also violating either the ``$k$-constraint'' or the capacity constraints by a small factor~\cite{Charikar:1999,ByrkaRybicki2015,ChuzhoyR05,Demirci2016,capkmshanfeili2014,capkmByrkaFRS2013,GroverGKP18}. 

In this paper, we are interested in studying clustering problems that simultaneously handles both the capacity constraints and outliers. To the best of our knowledge, the only works that simultaneously handle both of these constraints is \kcenter and \textsc{Facility Location} objectives, see e.g., Cygan et al.~\cite{Cygan-kCO} and Dabas and Gupta~\cite{DabasG22}. In this paper, we study a common generalization of \kmo and \ckm. We call this problem \textsc{Capacitated $k$-Median with Outliers} (\ckmo), defined as follows. 

\begin{tcolorbox}[colback=white!5!white,colframe=gray!75!black]
	\ckmo
	\\\textbf{Input:} Instance $\cI= ((\points, \dist), \clients, \facilities, k, \capacity{}, \outliers)$, where $(\points, \dist), \clients, \facilities$ have meaning as defined above. Additionally, each facility $f \in \facilities$ has a capacity $\capacity{f} \in \mathbb{N}$, and $0 \le \outliers \le n$ denotes the number of clients that can be left unassigned. 
	\\\textbf{Task:} To find
	\begin{itemize}
		\item a subset $\solution \subseteq \facilities$ of size at most $k$,
		\item a subset $\clients' \subseteq \clients$ of size at most $\outliers$ (called the set of \emph{outliers}) and
		\item an assignment $\assign : (\clients \setminus \clients') \rightarrow \solution$ respecting the capacities, i.e., for each facility $ f \in \facilities$, it holds that $|\sigma^{-1}(f)| \leq \capacity{f}$
	\end{itemize}
	such that the cost $\displaystyle \sum_{c \in \clients \setminus \clients'} \dist(c,\sigma(c))$ is minimized.
\end{tcolorbox}
Note that \ckmo generalizes \ckm, and the polynomial-time approximability of the latter itself remains open. A recent direction in the area is to look beyond polynomial-time approximations, in order to break the barrier of constant-factor approximability, or to obtain tight approximation factors.

\medskip\noindent\textbf{\textsf{FPT Approximations}.} By relaxing the strict requirement of polynomial running time, it is possible to obtain improved approximations for \kmed and its variants. Cohen-Addad et al. \cite{DBLP:conf/icalp/Cohen-AddadG0LL19} designed an $(1+2/e+\epsilon)$-approximation for \kmed that runs in time FPT in $k$ and $\epsilon$ \footnote{An algorithm with the running time $f(p) \cdot |I|^{O(1)}$ is said to be \emph{fixed-parameter tractable} (FPT) in the parameter $p$, where $|I|$ denotes the input size, and $f$ is an arbitrary function.}, matching a complexity-theoretic lower bound. Very recently, Agrawal et al.~\cite{agrawal2023kMO} designed an approximation-preserving reduction from \kmo to \kmed that is FPT in $k, m$ and $\epsilon$ (recall that $m$ denotes the number of outliers). As a corollary, they obtained a $(1+2/e+\epsilon)$-approximation for $k$MO that is FPT in $k, m$ and $\epsilon$. Finally, in the context of \ckm, Adamczyk et al.~\cite{AdamczykBMM019} broke the barrier of constant-factor approximability in general metrics, by designing first constant-factor approximation that runs in time FPT in $k, \epsilon$. This was later improved by Cohen-Addad and Li \cite{Cohen-AddadL19Capacitated} to a $(3+\epsilon)$-approximation for \ckm in general metrics, and a $(1+\epsilon)$-approximation in low-dimensional Euclidean metrics, in time in FPT in $k$ and $\epsilon$. There has also been some recent work on designing FPT approximations for capacitated versions of other related clustering objectives, such as sum-of-radii \cite{IV20,BLS23}. For the comparably easier objective of $k$-\textsc{Center}, Goyal and Jaiswal \cite{GoyalJ23} designed FPT approximations parameterized by $k$ and $m$, that handle different constraints including capacities and outliers. Inspired from this success of FPT approximations, we continue this line of work in the context of \ckmo and related problems.

\paragraph{Recent Independent Work.} In a recent and independent work, Jaiswal and Kumar \cite{jaiswal2023clustering} design similar \emph{almost approximation-preserving} reductions from constrained \kmed/\kmeans problems with outliers, to their outlier-free counterparts. As a result, they obtain similar approximations for various constrained versions of $k$-\textsc{Median/Means with Outliers} for various constraints, including capacities, $(\alpha, \beta)$-fairness, among others. However, a cursory examination suggests that their techniques are quite different from ours.

%\medskip\noindent\textbf{\textsf{Our Results and Techniques}.}
\subsection{Our Results and Techniques} 
One of the main results of this paper is the following theorem.
\begin{theorem}[Informal]
\label{thm:kmed-approx-preserving} 
There exists an approximation-preserving reduction from \ckmo to \ckm that runs in time FPT in $k, m$, and $\epsilon$, where the underlying metric space remains unchanged. %Specifically, if there exists a $\gamma$-approximation for \ckm running in time $T(|\cI|, k)$, then there exists a $(\gamma+\epsilon)$-approximation for \ckmo running in time $f(k, m, \epsilon) \cdot T(|\cI|, k+m) \cdot |\cI|^{O(1)}$.
\end{theorem}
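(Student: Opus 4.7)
The plan is to enumerate, in FPT time, a collection of \ckm instances on the same metric space, such that (i) for at least one instance in the collection, the optimal \ckm cost is within a factor $(1+\epsilon)$ of the \ckmo optimum, and (ii) any solution returned by the black-box \ckm algorithm on any enumerated instance lifts to a feasible \ckmo solution with no extra cost blow-up. Returning the best lifted solution over all enumerations then inherits the approximation ratio of the \ckm solver, up to the $(1+\epsilon)$ loss.

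The central difficulty is that the $\outliers$ optimal outliers are unknown, and brute-force enumeration of $\outliers$-subsets of $\clients$ takes $n^{\Theta(\outliers)}$ time. To avoid this, I would first draw a random subset $\wtpoints \subseteq \clients$ of FPT-size, via a $D^2$-style or bi-criteria-guided sampling scheme. Standard analyses show that with constant success probability (boostable by repetition), for every optimal cluster and every optimal outlier there is some point of $\wtpoints$ nearby in the metric. This turns the unknown outlier set into something we can isolate by enumeration over $\wtpoints$ rather than over all of $\clients$.

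Given $\wtpoints$, the enumeration proceeds as follows. For each subset $\wtout \subseteq \wtpoints$ of size at most $\outliers$ (i.e., $\binom{|\wtpoints|}{\outliers}$ choices, still FPT in $k, \outliers, \epsilon$), and for each radius $R$ drawn from a polynomial-sized set of $(1+\epsilon)$-geometric values, define the proxy outlier set
\[
  O' \;=\; \clients \cap \bigcup_{\tilde{o} \in \wtout} \ball(\tilde{o}, R).
\]
Build the \ckm instance $((\points, \dist), \clients \setminus O', \facilities, k, \capacity{})$ on the unchanged metric, invoke the black-box \ckm approximation, and lift the resulting solution to \ckmo by declaring $O'$ as outliers (trimming $O'$ to size $\outliers$ if necessary by reassigning the excess clients to their nearest opened facility, with cost and capacity accounted for in the analysis). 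For the ``correct'' guess---where $\wtout$ sits close to each true outlier and $R$ matches the associated distance scale---the restriction of $\opt$'s assignment to $\clients \setminus O'$ is feasible for the constructed instance, so its \ckm optimum is at most $(1+\epsilon)\cost(\opt)$.

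The main obstacle will be the interaction between outliers and capacity constraints in the lifting step. Unlike the uncapacitated reduction from \kmo to \kmed, where an ``overlooked'' outlier can simply be reassigned to its nearest open facility with additive cost bounded by triangle inequality, here we must respect each facility's capacity budget. Consequently, whenever $|O'|$ strictly exceeds $\outliers$, re-inserting the excess clients may over-fill some clusters. Handling this likely requires either (a) enumerating enough additional structural information (e.g., a per-cluster outlier count $m_j$ with $\sum_j m_j = \outliers$, of which there are only $\binom{\outliers+k}{k}$ choices) so that the correct guess already satisfies $|O'| \le \outliers$, or (b) threading an extra guess for a per-cluster capacity slack so that the \ckm instance is built with slightly reduced capacities against which any feasible \ckm solution can absorb the reassigned excess clients. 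Balancing such slack against the $(1+\epsilon)$ cost loss while preserving FPT dependence in $k, \outliers, \epsilon$ is the principal technical challenge.
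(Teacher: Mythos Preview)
Your plan has a genuine gap: the ``proxy outlier'' step does not survive the capacity constraints, and the sampling guarantee you invoke is not one that $D^2$ or bi-criteria sampling actually provides.

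First, the sampling claim. You assert that standard analyses give, with constant probability, a point of $\wtpoints$ near every optimal outlier. This is not what $D^2$ sampling (or a bi-criteria solution) guarantees: those tools place samples near dense clusters, not near $m$ arbitrary, potentially isolated points. Outliers can be anywhere, and nothing in your scheme forces $\wtpoints$ to land close to each of them. Without that, no guess $(\wtout, R)$ produces an $O'$ containing the true outlier set $O^*$, and the \ckm instance on $\clients \setminus O'$ is then forced to serve some true outliers at unbounded cost.

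Second, even granting the sampling claim, your own diagnosis of the lifting step is the fatal obstacle, not a side issue. A single radius $R$ per guess means the balls around $\wtout$ can sweep in many non-outlier clients, so the ``correct'' $O'$ typically satisfies $|O'| \gg \outliers$. You must then reinsert $|O'|-\outliers$ clients into a capacitated solution that may already be tight. Your fix~(a), guessing per-cluster outlier counts, does not shrink $O'$; fix~(b), reducing capacities to leave slack, destroys the upper bound since $\opt$ need not be feasible under the reduced capacities. Neither suggestion closes the loop, and the proposal ends by naming this as the ``principal technical challenge'' rather than resolving it.

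The paper takes a different route that sidesteps both problems. It never tries to locate the outliers geometrically. Instead it builds, via ring sampling from a $(k+\outliers)$-median bicriteria solution, a weighted set $\wtset$ of size $\widetilde{O}((km)^2/\epsilon^3)$ and proves a \emph{coreset} guarantee that simultaneously respects capacities and outliers: for every feasible $\solution$, $\wcost_{\outliers}(\wtset,\solution) = (1\pm\epsilon)\,\cost_{\outliers}(\clients,\solution)$. The technical heart is bounding the error from over/under-sampling both the per-facility assigned subsets \emph{and} the outlier subset within each ring, analyzed through a new \mcfo problem. Once the coreset is in hand, the enumeration is over integer weight vectors $(\wtout,\vecz)$ summing to $\outliers$ inside $\wtset$---at most $|\wtset|^{\outliers}\cdot \outliers^{\outliers}$ guesses---and for the guess matching the optimal assignment on $\wtset$ the resulting \wckm instance has optimum at most $(1+\epsilon)\opt$, with no lifting or trimming needed. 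That coreset lemma is the missing idea in your plan.
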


We can also extend this result for related clustering objectives, such as \kmeans and \textsc{$k$-Facility Location}. Recall that \kmeans objective function involves sum of squares of distances, whereas $k$-\textsc{Facility Location} is a generalization of \kmed, where the objective involves facility opening costs. For both of these objectives---and indeed for $(k, z)$-clustering that involves $z$-th power of distances---we can reduce the capacitated clustering with outliers problem to the corresponding capacitated clustering variant (without outliers) in time FPT in $k, m$ and $\epsilon$. Then, by plugging in the best known approximations for the capacitated clustering problems (without outliers), we obtain matching approximation ratios. These corollaries are stated in \Cref{fig:corollaries}.

\begin{figure}
	\centering
	\begin{tabular}{|c|c|c|c|c|}
		\hline
		Problem & Metric & Approximation  \\
		\hline
		{\footnotesize \textsc{\constr $k$-Median with Outliers}} & Arbitrary & $3+\epsilon$ 
		\\ \hline
		{\footnotesize \textsc{\constr $k$-Median with Outliers}} & Euclidean & $1+\epsilon$  \\\hline
		{\footnotesize \textsc{\constr $k$-Means with Outliers}} & Arbitrary & $9+\epsilon$  \\\hline
		{\footnotesize \textsc{\constr $k$-Facility Location with Outliers}} & Arbitrary & $3+\epsilon$  \\\hline
	\end{tabular}
	\caption{Selected corollaries of \Cref{thm:kmed-approx-preserving}, \Cref{thm:fairkmed-approx-informal} and their extensions. In the first column, the constraint \constr can be replaced with either \textsc{Capacitated} or with \textsc{($\fairone,\fairtwo$)-Fair}. In each row, when \constr = \textsc{Capacitated}, we plug in the relevant FPT approximation from Cohen-Addad and Li \cite{Cohen-AddadL19Capacitated}, whereas when \constr = \textsc{($\fairone,\fairtwo$)-Fair}, we use the relevant FPT approximation from Bandyapadhyay et al.~\cite{BandyapadhyayFS21}; and obtain the approximation ratios as stated in the third column. 
		\\Finally, by plugging in the QPTAS from \cite{Cohen-AddadL19Capacitated} for \ckm in doubling spaces, we obtain a $(1+\epsilon)$-approximation for \ckmo that runs in time $f(k, m, \epsilon) \cdot 2^{\poly(\log(|\cI|, 1/\epsilon))}$.} \label{fig:corollaries}
\end{figure}

\medskip\noindent\textbf{\textsf{Ring sampling approach and comparison to related work}.}
A starting point of our algorithm---like many other FPT approximations for \kmed and variants (\cite{DBLP:conf/icalp/Cohen-AddadG0LL19,Cohen-AddadL19Capacitated,agrawal2023kMO})---is the elegant ``ring sampling'' approach of Chen \cite{chen2009coresets}. We start with a brief overview of this construction, and subsequently discuss how this is used in our reduction framework to handle capacities and outliers.

In this construction, the idea is to do random sampling for ``compressing'' the given set of $n$ clients into a ``small'' number of \emph{weighted clients} that is a good approximation of the original set. \footnote{This notion can be formalized to define \emph{coresets}. We omit a formal definition here.} Let us consider the simplest setting of \kmed without capacities or outliers as \cite{chen2009coresets}. We start from a crude (constant factor) approximation $\solution' \subseteq \facilities$ for \kmed, and use it to partition the clients into \emph{concentric rings} around the centers in $\solution'$. Here, a ring is a subset of clients that are within a distance $r$ and $2r$ from some $f \in \solution'$, where we consider geometrically increasing radii $r$. 
%Note that all points in a particular ring have \emph{approximately equal} distance to the ring center $f$, which comes from an \emph{approximate solution}. A crucial lemma from \cite{chen2009coresets} shows that, despite these two layers of approximation, one can treat all clients in a ring as \emph{equivalent}, up to small error. 
Next, we take a uniform sample of clients from each ring of large enough size, and assign appropriate weight to each sampled client. The resulting weighted sample has size $O(k \log n/\epsilon)^2)$. It can be shown that, the \emph{weighted clients} in the sample approximately preserve the cost of clustering w.r.t.~any solution with high probability. This is shown using a Chernoff-Hoeffding type concentration bound, which uses the following observation. Even though the distances of each weighted client in the sample is a random variable, all random variables lie in a bounded interval due to triangle inequality. Precisely, for any solution $\solution \subseteq \facilities$, and for any ring $R \subseteq \clients$ with radius $r$, the distances $\LR{d(c, \solution)}_{c \in R}$ lie within an interval of length $2r$. 

Since the size of the sample is relatively small, it is used in \cite{DBLP:conf/icalp/Cohen-AddadG0LL19} for partial enumeration of a subset of approximate solutions for \kmed, which leads to the FPT running time. Agrawal et al.~\cite{agrawal2023kMO} argue that in the context of \kmo, by taking an even larger sample size (that now also depends on $\outliers$), the concentration bound is robust enough to handle outliers. In particular, they show that for each ring, the concentration bound continues to hold \emph{even after} ignoring the contribution of at most $\outliers$ points that are farthest from the solution. 

Cohen-Addad and Li \cite{Cohen-AddadL19Capacitated} extend the ring sampling approach to \ckm, but the analysis is more intricate due to the following reason. Consider a particular ring $R$ with center $f$, and a hypothetical solution $\solution \subseteq \facilities$ of size $k$. For each $f_i \in \solution$, let $C_{f_i}$ denote the subset of clients from $R$ that are assigned to $f_i$. Since we take a uniform sample from each ring, in expectation we maintain the relative proportions of $|C_{f_i}|$'s in the sample. Therefore, in expectation, the sample maintains the cost as well as capacity constraints. However, it is quite likely that we \emph{over-} or \emph{under-sample} clients from different $C_{f_i}$'s, and we need to bound the cost of a feasible assignment of the weighted sample. To this end, the authors~\cite{Cohen-AddadL19Capacitated} use an instance of Minimum Cost Flow (\mcf) problem. They use the ring-center $f$ to reroute flows between different over- and under-sampled $C_{f_i}$'s, and the cost of rerouting can be bounded using a generalization of the above argument involving triangle inequality.

\medskip\noindent\textbf{\textsf{Our sampling and analysis.}} Since we need to handle both capacities and outliers, we need to be even more careful in our analysis. In a solution for \ckmo, unlike \ckm, not all clients in a ring $R$ may be assigned to facilities -- some can be outliers. 
% Here, we stumble upon the following subtelty.
In presence of capacities, this issue cannot be handled by simply ignoring the contribution of $\outliers$ farthest clients from $\solution$, unlike in \kmo. The optimal choice of weighted outliers from $R$, in fact, depends on the outcome of sampling outside $R$. Though, in expectation, we maintain relative proportions of $|C_{f_i}|$'s and of outliers in $R$, we may over- or under -sample clients from different $C_{f_i}$'s and over- or under -sample clients from the set of outliers.  However, we are able to decouple the analysis of under/over-sampling in each $C_{f_i}$ from the analysis of under/over-sampling in the set of outliers in $R$. At this point, the error incurred by sampling from the subset of assigned clients can be bounded by essentially following the argument from \cite{Cohen-AddadL19Capacitated}. Next, we bound the error incurred while sampling from the set of outliers. Since we treat each ring independently in the analysis, the subset of outliers from $R$ is essentially arbitrary.  To this end, we generalize the argument from Agrawal et al.~\cite{agrawal2023kMO}, and prove that rerouting cost in case of over/under sampling from the set of outliers is negligible for \emph{any} set of weighted clients of weight at most $m$, not just for those that are farthest from $\solution$. To bound the costs of rerouting, we analyze properties of a new problem, called Minimum Cost Flow with Outliers (\mcfo), which generalizes \mcf. The problem is used at several places in our paper and is of independent interest.

In conclusion, we show that the ring sampling approach can be used to construct a weighted sample $W$ of size $O\lr{\frac{(km \log n)^2}{\epsilon^3}}$ that approximately preserves the \ckmo cost w.r.t. any feasible solution $\solution \subseteq \facilities$ of size at most $k$, with high probability (\Cref{lem:coreset-multiring}).

Next, we discuss how to use $W$ to reduce \ckmo to \ckm. We enumerate all possible subsets of $\wtset' \subseteq \wtset$ of total \emph{weight} $m$ that corresponds to the set of outliers corresponding to an optimal solution $(F^*, \sigma^*)$ for the given instance. Using the bound on $|W|$, it can be easily argued that the number of such guesses is bounded by $f(k, m, \epsilon) \cdot |\cI|^{O(1)}$. For each such guess $W' \subseteq W$ for outliers, we obtain an instance of weighted \ckm by deleting $W'$, and it can be easily converted to unweighted \ckm by creating multiple co-located copies of each weighted client. For each such instance of \ckm, we use a $\gamma$-approximation, and we return the minimum-cost solution found over all guesses (for computing an optimal assignment for \ckmo, we need to be able to solve \mcfo in polynomial time, which we show in \Cref{lem:MCFO} by reducing it to \mcf). Using \Cref{lem:coreset-multiring}, we argue in \Cref{lem:finalcost} that in the iteration corresponding to the ``correct'' guess of $W'$, a $\gamma$-approximate solution for the \ckm instance is, in fact, a $(\gamma+\epsilon)$-approximation for the original instance of \ckmo.

%Furthermore, \mcfo is also used in our algorithm to find an optimal assignment $\sigma^*: \clients \to \facilities$ for \ckmo, for a given solution $\solution \subseteq \facilities$. Thus, in \todo{lemma mcfo}, we give a polynomial-time reduction from \mcfo to \mcf, which is well-known to be solvable in polynomial time.

\medskip\noindent\textbf{\textsf{Balanced Fair Clustering with Outliers.}} \textsc{$(\alpha, \beta)$-Fair $k$-Median} problem \textsf{($(\alpha, \beta)$-F$k$M)}, has been recently introduced in \cite{Chierichetti0LV17,Rosner018} and received significant attention. The model is motivated from the doctrine of disparate impact (DI) \cite{Chierichetti0LV17}, which states that protected attributes of population (such as sensitive demographic information) should not be used in decision making, and the decisions should not affect different demographic groups in a disproportionate manner. In the context of clustering, the protected attributes of population are modeled by dividing the clients into $\groups$ different groups, and the goal is to find a minimum-cost clustering that ensures the specified proportions of colors in each cluster. A particular case of interest is when we are required to find clusters, where the ``local proportions'' of each group are roughly equal to their ``global proportions''. Recently, Bandyapadhyay et al.~\cite{BandyapadhyayFS21} used the ring sampling approach to design FPT approximations for this problem and its variants. We study a generalization of this problem, called \textsc{$(\alpha, \beta)$-Fair $k$-Median with Outliers} (\textsf{$(\alpha, \beta)$-F$k$MO}), where we want to find a balanced clustering of clients, while excluding at most $m_{j}$ outliers of each group $1 \le j \le \groups$. Similar to \Cref{thm:kmed-approx-preserving}, we give a reduction from \textsf{$(\alpha, \beta)$-F$k$MO} to \textsf{$(\alpha, \beta)$-F$k$M}, which is the second main result of our paper. This is stated in the following informal theorem.

\begin{theorem}[Informal] \label{thm:fairkmed-approx-informal}
There exists an approximation-preserving reduction from $(\fairone, \fairtwo)$-\textsf{F$k$MO} to $(\fairone, \fairtwo)$-\textsf{F$k$M} that runs in time FPT in $k, m = \sum_{j \in [\groups]} m_j$, and $\epsilon$, where the underlying metric space remains unchanged.
\end{theorem}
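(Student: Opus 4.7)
The plan is to follow the reduction framework developed for \Cref{thm:kmed-approx-preserving}, adapting each of its three main components---ring sampling, enumeration of outlier guesses, and solving the resulting outlier-free instance---to handle the $\groups$ groups and the per-group outlier budgets $m_1, \ldots, m_\groups$. First, I would compute a crude constant-factor approximation for vanilla \kmed on $\clients$ (ignoring fairness and outliers), and use its centers to define concentric rings. Crucially, I would then treat each color class $j \in [\groups]$ separately: for each ring, independently sample clients of color $j$ to produce a weighted sample $\weights$, where the sample size is inflated to depend polynomially on $k$, $m = \sum_j m_j$, $\groups$, and $1/\epsilon$. The final weighted set is $\wtset = \bigcup_{j \in [\groups]} \weights$, with each sampled point inheriting its color so that fairness constraints remain well-defined on $\wtset$.

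The second step is to prove the analog of \Cref{lem:coreset-multiring}: for every feasible $(\fairone, \fairtwo)$-fair solution $\solution \subseteq \facilities$ of size at most $k$ that excludes at most $m_j$ outliers of color $j$, the weighted sample approximately preserves the cost with high probability. The decoupling that made the capacitated analysis work applies here as well, because within a ring $\ring$ of color $j$, the error decomposes into (i) over/under-sampling among the clients assigned to each $f_i \in \solution$, and (ii) over/under-sampling among the outliers of color $j$ in $\ring$. Part (i) is handled by the \mcf-based rerouting argument of Cohen-Addad--Li applied within color $j$, while part (ii) is handled by the outlier-rerouting bound via the \mcfo argument generalized from Agrawal et al., which bounds the cost of any reassignment of weight at most $m_j$. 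A union bound over the $\groups$ colors, $k$ candidate assignments per ring, and all rings gives the concentration statement.

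The third step is the enumeration-and-reduction step. I would enumerate all tuples $(\wtset_1', \ldots, \wtset_\groups')$ with $\wtset_j' \subseteq \weights$ of total weight at most $m_j$; the number of such tuples is bounded by $\prod_{j \in [\groups]} |\weights|^{O(m_j)} = f(k, m, \groups, \epsilon) \cdot |\cI|^{O(1)}$. For each guess, delete these weighted outliers from $\wtset$, convert the remaining weighted (and colored) instance into an unweighted $(\fairone, \fairtwo)$-\textsf{F$k$M} instance by creating co-located copies (co-located copies preserve color, so the fairness constraints translate faithfully), and invoke the FPT $\factor$-approximation of Bandyapadhyay et al.~\cite{BandyapadhyayFS21}. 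To recover an assignment on the \emph{original} client set from the returned centers $\solution$, one needs to solve a min-cost flow variant that simultaneously enforces the $(\fairone, \fairtwo)$-proportion constraints on each cluster and permits up to $m_j$ outliers of color $j$; this is a natural extension of \mcfo (the lemma analogous to \Cref{lem:MCFO}), solvable in polynomial time by reducing to a standard min-cost flow with auxiliary color-dummy sinks. Arguing as in \Cref{lem:finalcost}, the iteration corresponding to the ``correct'' choice of $\wtset_j'$ (matching the outliers chosen by the optimum) returns a solution of cost at most $(\factor + \epsilon) \cdot \opt$.

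The main obstacle I expect is step two: in the capacitated setting, the rerouting used a single ring-center as a Steiner hub because only capacity constraints had to be preserved; under fairness, the flow rerouting must simultaneously preserve the counts of every color in every cluster, so the natural rerouting does not obviously respect $(\fairone, \fairtwo)$-feasibility. The way around this should be to perform rerouting color-by-color, bounding costs within each color class and only later observing that the per-color feasibility of the rerouted flow implies global $(\fairone, \fairtwo)$-feasibility---because the fairness ratios are determined by the color counts per cluster, which are left invariant within each color-wise rerouting. A secondary technical point is that our crude starting solution is only a \kmed approximation, not a fair one; this is harmless because the rings are used purely as a structural partition, and the optimum fair-with-outliers cost upper-bounds the \kmed cost only up to factors in $\fairone, \fairtwo$ which get absorbed into the sample-size polynomial.
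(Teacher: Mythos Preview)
Your overall plan mirrors the paper's approach---per-color ring sampling, a coreset lemma analogous to \Cref{lem:coreset-multiring}, enumeration of weighted outlier guesses, and a call to the outlier-free $(\fairone,\fairtwo)$-fair approximation. However, there are two genuine gaps.

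The more serious one is in your third step. You assert that the assignment subproblem---given a fixed set of centers $\solution$, find a minimum-cost assignment satisfying the $(\fairone,\fairtwo)$-proportion constraints and the per-color outlier budgets---is ``solvable in polynomial time by reducing to a standard min-cost flow with auxiliary color-dummy sinks.'' This is false: the $(\fairone,\fairtwo)$-fair assignment problem is \NP-hard even without outliers (see \cite{Bera2019}, as noted in \cite{BandyapadhyayFS21}). The fairness constraints couple the color counts within each cluster multiplicatively, and this cannot be encoded as a network flow. The paper instead formulates \emph{Weighted Fair Assignment with Outliers} as a mixed-integer linear program with $k\groups$ integer variables (the per-color, per-facility counts $y_{fg}$) and solves it via Lenstra's algorithm in time $(k\groups)^{O(k\groups)} \cdot |\cI|^{O(1)}$; integrality of the assignment variables is then recovered by a per-color min-cost flow once the $y_{fg}$ are fixed. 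Without this FPT assignment routine, you cannot compute $\wcost_{\outliers}(\wtset,\solution)$, and your analog of \Cref{lem:MCFO} collapses.

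The second gap is in your starting crude solution. You propose solving vanilla \kmed on all of $\clients$ and argue that ``the optimum fair-with-outliers cost upper-bounds the \kmed cost only up to factors in $\fairone,\fairtwo$.'' This is not true: the discrepancy is caused by outliers, not by the fairness parameters. If $m$ clients sit at distance $D$ from the rest, the \kmed optimum scales with $mD$ while $\opt(\cI)$ need not; your ring radii and hence the error term $\sum_{f,j}\epsilon|\rings{f}{j}|\radr$ would then be unbounded relative to $\opt(\cI)$. The paper fixes this exactly as in the capacitated case: it solves $(k+m)$-\textsc{Median} with a facility co-located at every client, so that the $m$ outliers can each be opened as a zero-cost center, yielding $\opt(I_0)\le\opt(\cI)$.
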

Similar to \Cref{thm:kmed-approx-preserving}, this theorem can also be extended to the related \kmeans and \textsc{$k$-Facility Location} objectives. Some of the corollaries of this theorem are stated in \Cref{fig:corollaries}.

%!TEX root = main.tex
\section{Preliminaries} \label{sec:prelims}
We will use $\mathbb{N}$, $\mathbb{R}$ and, $\mathbb{R}_{+}$ to denote the set of non-negative integers, the set of reals and the set of non-negative reals. 
Let ($\points$, $\dist$) be a metric where, $\points$ is a finite sets of points and $\dist:\points \times \points \rightarrow \mathbb{R}$ is a distance function satisfying triangle inequality and symmetry.

\medskip\noindent\textbf{\textsf{Minimum Cost Flow with Outliers (\mcfo)}.} In Minimum Cost Flow with Outliers problem (\mcfo), we are given a flow network $G=(V,E)$, set of facilities $\solution \subseteq V$, a set of demand points $\clients \subseteq V$ and an integer $\outliers$. Each facility $i \in \solution$ has some supply $\capacity{i}$, each demand point $j \in \clients$ has some demand $\wt(j)$ and every edge $(u,v) \in E$ has cost $\dist(u,v)$. The goal is to send at least $\sum_{j \in \clients} \wt(j) - \outliers$ demand from demand points in $\clients$ to facilities in $\solution$ using edges in $E$ such that every facility $i \in \solution$ receives flow less than or equal to its supply $\capacity{i}$ and the total cost is minimized (the total cost is the sum of the cost of the edges used to send the demand). \mcfo is a generalization of popular Minimum Cost Flow problem (\mcf) and is same as \mcf if $\outliers=0$. \mcf can be solved using linear programming in polynomial time.

\begin{restatable}{lemma}{mcfolemma}
\label{lem:MCFO}
There exists a polynomial time algorithm that returns an integral solution for \mcfo problem. 
\end{restatable}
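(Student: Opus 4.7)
The plan is to reduce \mcfo to the standard Minimum Cost Flow problem (\mcf) by introducing a single ``dummy'' facility that absorbs the outlier demand at zero cost. Given an instance of \mcfo on a network $G=(V,E)$ with facilities $\solution$, demand points $\clients$, supplies $\capacity{i}$, demands $\wt(j)$, edge costs $\dist(u,v)$, and outlier budget $\outliers$, I construct an auxiliary \mcf instance on $G' = (V \cup \{\dummy\},\, E \cup \{(j, \dummy) : j \in \clients\})$. The new facility $\dummy$ is assigned supply exactly $\outliers$, each new edge $(j,\dummy)$ is given cost $0$, and all original supplies, demands, capacities, and edge costs are preserved. The resulting \mcf instance asks to route all $\sum_{j \in \clients} \wt(j)$ units of demand from $\clients$ to $\solution \cup \{\dummy\}$ at minimum cost while respecting supplies.

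The equivalence is then routine to verify. Any feasible \mcfo solution routing at least $\sum_{j \in \clients} \wt(j) - \outliers$ units to $\solution$ can be extended to a feasible \mcf solution of the same cost on $G'$ by routing the remaining (at most $\outliers$) units via the zero-cost edges into $\dummy$; this is possible since $\dummy$ has supply $\outliers$. Conversely, any feasible \mcf solution on $G'$ sends at most $\outliers$ units to $\dummy$ (by the supply bound on $\dummy$), so the flow restricted to the original edges $E$ yields a feasible \mcfo solution of the same cost. In particular, optimal values of the two instances coincide, and an optimal \mcf flow on $G'$ projects to an optimal \mcfo flow on $G$.

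Finally, since \mcf admits polynomial-time algorithms (e.g., network simplex, successive shortest paths, or cost-scaling) that return an integer-valued optimal flow whenever all capacities, supplies, and demands are integral---a consequence of the total unimodularity of the underlying constraint matrix---applying any such solver to the auxiliary instance and projecting back produces an integral optimal solution to \mcfo in polynomial time. I do not expect a serious technical obstacle: the only item to be careful about is fixing the supply of $\dummy$ to be exactly $\outliers$ (rather than unbounded), so that the backward mapping respects the outlier budget, but this is guaranteed by the construction.
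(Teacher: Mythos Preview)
Your reduction is correct and in fact cleaner than the paper's. Both proofs add a dummy facility $\dummy$ with supply $\outliers$, but the paper assigns every edge into $\dummy$ a large cost $\largedist$ (slightly exceeding the maximum distance), whereas you assign cost $0$. The paper then has to argue that $\opt(\I_{\mcfo}) = \opt(\I_{\mcf}) - X$ for a fixed offset $X$ depending on how much flow is forced onto $\dummy$, and this requires a separate case analysis. Your zero-cost edges make the correspondence exact: the two optima coincide, and the projection of an optimal \mcf flow is immediately an optimal \mcfo flow. The large-cost variant buys nothing extra here; your construction is simpler and the integrality argument via standard \mcf solvers goes through identically.
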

\begin{proof}
	To solve \mcfo in polynomial time we will use a solution of \mcf. Given an instance $\I_{\mcfo}$ of \mcfo, we first create an instance $I'$ of \mcf as follows: create a dummy facility $\dummy$ with supply $\capacity{\dummy}=\outliers$. Add an edge from $\dummy$ to every other vertex $v \in V$ such that $\dist(\dummy,v) = D$ where $D$ is a value slightly greater than the largest distance in the \mcfo instance. Every demand point $j \in \clients$ is a demand point with total demand $\wt(j)$. Every facility $i \in \solution \cup \{ \dummy\}$ is a supply point with $\capacity{i}$ supply. 
	
	\begin{claim}   
		Let $\opt(\I_{\mcfo})$ be the cost of optimal solution to $\I_{\mcfo}$. $\opt(\I_{\mcfo}) = \opt(\I_{\mcf}) - X$ where $X$ is a fixed cost for a given instance $\I_{\mcfo}$.
	\end{claim}
	\begin{proof}
		Let $\outliers'= \max \{0, \sum_{j \in \clients}\wt(j) - \sum_{i\in \solution}\capacity{i}$ \}.
		
		We will prove this in two parts: $(i)$ $\opt(\I_{\mcfo}) \leq \opt(\I_{\mcf}) - X$ and $(ii)$ $\opt(\I_{\mcfo}) \geq \opt(\I_{\mcf}) - X$.
		
		To prove $\opt(\I_{\mcfo}) \leq \opt(\I_{\mcf}) - X$, let $S'$ be the optimal solution for \mcf with value $\opt(\I_{\mcf})$. A solution $S$ for \mcfo can be obtained from $S'$ as follows: discard the total flow coming onto the dummy facility $\dummy$ as outlier flow. Note that this demand is at most $\outliers$. If the flow coming onto $\dummy$ is strictly less than $\outliers$ (note that it must be $\outliers'$ in this case), we will make additional $\outliers-\outliers'$ demand outlier from other facilities. This is the most expensive $\outliers-\outliers'$ demand that is served in $S'$ from facilities in $\solution$. Let $Cost(\outliers-\outliers')$ represents the cost of this demand in the flow $S'$.%If $\outliers'=0$, then $OPT(I) \leq Cost(S) = OPT(I') - \outliers \cdot  \largedist$ and the claim follows for $C= \outliers \cdot  \largedist$. Else 
		We have, $\opt(\I_{\mcfo}) \leq Cost(S) = \opt(\I_{\mcf}) - \outliers' \cdot  \largedist - Cost(\outliers-\outliers')$. The claim follows for $X= \outliers' \cdot  \largedist + Cost(\outliers-\outliers')$.
		
		Next to prove $\opt(\I_{\mcfo}) \geq \opt(\I_{\mcf}) - X$ or $\opt(\I_{\mcf}) \leq \opt(\I_{\mcf}) + X$, we will create a feasible solution $S'$ of $\I_{\mcf}$ from the optimal solution $S$ of $\I_{\mcfo}$ as follows: send $\outliers'$ outlier demand to $\dummy$ at $\outliers' \cdot  \largedist$ cost. Send remaining $\outliers-\outliers'$ demand to facilities in $\solution$ using \mcf. Note that, at least $\outliers-\outliers'$ supply is available at facilities in $\solution$ by definition of $\outliers'$. Let $Cost'(\outliers-\outliers')$ be the cost paid for this. If $Cost'(\outliers-\outliers') \leq Cost(\outliers-\outliers')$, we are done. $\opt(\I_{\mcf}) \leq Cost(S') \leq \opt(\I_{\mcfo}) + X$ for $X= \outliers' \cdot  \largedist + Cost(\outliers-\outliers')$. Otherwise, there exist a solution to $\I_{\mcfo}$ of cost $< \opt(\I_{\mcfo})$ which is a contradiction.
	\end{proof}
\end{proof}

\iffalse
\begin{definition}[\ckm]
In the capacitated $k$-median problem (\ckm), we are given two finite sets of points, $\clients$ and $\facilities$, in some metric space $(\points,\dist)$ and an integer $k \geq 1$. The points of $\clients$ are referred to as the \emph{clients} and that in $\facilities$ are called \emph{facilities} or \emph{centers}. Each facility $f \in \facilities$ has a capacity $\capacity{f} \in \nat$. The objective is to find a subset $\solution \subseteq \facilities$ of size at most $k$ and an assignment $\assign : \clients \rightarrow \solution$ such that $\forall f \in \facilities, |\{c|\assign(c)=f\}| \leq \capacity{f}$,  and the cost $\sum_{c \in \clients} \dist(c,\sigma(c))$ is minimized. \textcolor{blue}{Let $|\clients| = n$}.
\end{definition}

\begin{definition}[\ckmo]
In capacitated $k$-median with outliers problem (\ckmo), we are also given an integer $0 \le \outliers \leq n$ representing the number of clients that can be left unassigned. That is, the objective now is to to find a subset $\solution \subseteq \facilities$ of size at most $k$, a subset $\clients' \subseteq \clients$ of size at most $\outliers$, an assignment $\assign : (\clients \setminus \clients') \rightarrow \solution$ such that $\forall f \in \facilities, |\{c|\assign(c)=f\}| \leq \capacity{f}$,  and the cost $\sum_{c \in \clients \setminus \clients'} \dist(c,\sigma(c))$ is minimized. The points in $\clients'$ are called \emph{outliers}.
\end{definition}
\fi
Given an instance $\cI$ of \ckmo and a set of facilities $\solution \subseteq \facilities$, we define $\cost_{\outliers}(\clients,\solution)$ as the cost of optimal assignment with $\outliers$ outliers. If the sum of the capacities of the facilities in $\solution$ is less than $n - \outliers$, then we call such a set $\solution$ to be infeasible and define  $\cost_{\outliers}(\clients,\solution) = \infty$.
Note that, for a given feasible $\solution$, the set $\clients'$ of outliers and the assignment of clients to facilities can easily be determined by solving a minimum cost flow with outliers problem (\mcfo). MCFO can be solved in polynomial time to return an integral solution as stated in \Cref{lem:MCFO}. Hence, $\cost_{\outliers}(\clients,\solution)$ can be computed for a given $\solution \subseteq \facilities$.

In this paper, we will often have non-negative integer weights on clients. Essentially, the weight on a client can be thought of as multiple co-located copies of a client. A formal definition follows. 
\begin{definition} (\wckmo).\label{def:wckmo}
The input is $\cI= ((\points,\dist),\wtset,\facilities,k, \capacity{}, \outliers)$ where $\wtset \subseteq \clients \times \nat$ is the set of pairs $\{(c,\wt(c)):c\in \clients\}$ and $\wt : \clients \rightarrow \nat$ is a weight function. The objective now is to find a subset $\solution \subseteq \facilities$ of size at most $k$ and an assignment $\assign : \clients \times \solution \rightarrow \nat$ such that:
\begin{enumerate}
    \item For each $c \in \clients$, $\sum_{f \in \solution} \assign(c,f) \leq \wt(c)$,
    \item Total unassigned weight is at most $\outliers$, i.e., $\sum_{c \in \clients} \wt(c) - \sum_{c \in \clients, f \in \solution} \assign(c,f) \leq \outliers$, 
    \item Assignment must respect capacity constraints for each $f \in \solution$, i.e., $\displaystyle\sum_{c \in \clients} \assign(c,f) \leq \capacity{f}$, and
    \item $\sum_{c \in \clients, f \in \solution} \assign(c,f) \dist(c,f)$ is minimized.
\end{enumerate}
\end{definition}

Similar to \ckmo, given an instance of \wckmo and a set of facilities $\solution \subseteq \facilities$, we define $\wcost_{\outliers}(\wtset,\solution)$ as the cost of optimal assignment where the total unassigned weight is at most $\outliers$. If the sum of the capacities of the facilities in $\solution$ is less than $\sum_{c \in \clients}\wt(c) - \outliers$, then we call such a set $\solution$ to be infeasible and define  $\wcost_{\outliers}(\clients,\solution) = \infty$.
Just like \ckmo, for a given feasible $\solution$, the assignment of clients to facilities for \wckmo can also be easily determined by solving \mcfo. Hence, $\wcost_{\outliers}(\wtset,\solution)$ can be obtained for a given $\solution \subseteq \facilities$.

%!TEX root = main.tex
\section{The Algorithm}
Given an instance $\cI=((\points,\dist),\clients,\facilities,k, \capacity{},\outliers)$ of \ckmo, we first create an instance $\IwO=((\points,\dist),\clients,\facilities \cup \clients,k+\outliers)$ of ($k+\outliers$)-Median (un-capacitated) where we have a facility co-located with every client in addition to the original set of facilities. The following claim follows from the easy observation that any solution $\cI$ is a feasible solution for $\IwO$.

\begin{restatable}{claim}{bdopti}
\label{lem:bd-optI}
Value of an optimal solution to $\IwO$ is bounded by the value of an optimal solution to $\cI$, that is, $\opt(\IwO) \leq \opt(\cI)$.
\end{restatable}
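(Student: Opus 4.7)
The plan is to exhibit a feasible solution to $\IwO$ whose $(k+\outliers)$-median cost is at most $\opt(\cI)$, which immediately yields the desired inequality. The key idea, as hinted, is that the instance $\IwO$ is strictly more permissive than $\cI$: it has no capacities, it requires no outliers, and its facility set $\facilities \cup \clients$ is a superset of $\facilities$, so any structure coming from an optimal \ckmo solution can be mimicked in $\IwO$ by opening extra zero-cost centers at the outlier points.

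First I would fix an optimal solution $(\solution^*, \clients'^*, \assign^*)$ for $\cI$ with $|\solution^*| \le k$, $|\clients'^*| \le \outliers$, and $\sum_{c \in \clients \setminus \clients'^*} \dist(c, \assign^*(c)) = \opt(\cI)$. Then I would construct a candidate solution for $\IwO$ as follows: open the set of centers $\solution' := \solution^* \cup \clients'^*$, which lies inside $\facilities \cup \clients$ and has size at most $k + \outliers$, so it is a feasible center set for $(k+\outliers)$-\textsc{Median}. For the assignment, send every non-outlier client $c \in \clients \setminus \clients'^*$ to $\assign^*(c) \in \solution^* \subseteq \solution'$, and send every outlier $c \in \clients'^*$ to its co-located facility in $\clients \subseteq \solution'$, incurring distance $\dist(c,c) = 0$.

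Next I would compute the cost of this solution in $\IwO$. Since $\IwO$ is uncapacitated with no outliers, every client contributes its assigned distance. The non-outlier clients contribute exactly $\sum_{c \in \clients \setminus \clients'^*} \dist(c, \assign^*(c)) = \opt(\cI)$, while the outlier clients contribute $0$. Therefore the total cost equals $\opt(\cI)$. Since the optimum of $\IwO$ is at most the cost of any feasible solution, we conclude $\opt(\IwO) \le \opt(\cI)$.

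I do not anticipate any real obstacle here: the only things to check are that (i) $\solution'$ is a valid set of at most $k+\outliers$ facilities drawn from $\facilities \cup \clients$, and (ii) the described assignment is well-defined in the absence of capacity constraints. Both are immediate from the construction of $\IwO$, so the argument is essentially a one-paragraph verification.
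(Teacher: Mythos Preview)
Your proposal is correct and follows exactly the approach the paper intends: the paper merely states that any solution of $\cI$ is feasible for $\IwO$, and your construction (open $\solution^*\cup\clients'^{*}$, send each outlier to its co-located facility at cost $0$) is precisely the verification of that one-line observation.
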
 

Now, the instance $\IwO$ is solved using any polynomial time algorithm (say \cite{GowdaPST23}) to obtain a constant ($\factortwo$) factor approximation. Let $\solution_{\factortwo}$ be the set of facilities opened by the algorithm. Note that $k_{\factortwo} = |\solution_{\factortwo}| \leq k+\outliers$ and 
\begin{equation}
  \label{eqn:alpha-opt-bd}  \cost_0(\clients,\solution_{\factortwo}) \leq \factortwo \cdot \opt(\IwO) \leq \factortwo \cdot \opt(\cI)
\end{equation} %\textcolor{red}{Let $\rad = \frac{\cost_0(\clients,\solution_{\factortwo})}{\factortwo (n+\outliers)}$ be a lower bound on average radius, and $\phi=\ceil{\log(\factortwo(n+\outliers))}$}. 
 For every facility $f \in \solution_{\factortwo}$, let $\clients_{f} \subseteq \clients$ be the clients assigned to $f$ by the algorithm. Note that the sets $\clients_{f}$ are disjoint. Let $\ball(f,X) \subseteq \clients_f$ denote the ball of radius $X$ consisting of clients, in $\clients_f$, within distance $X$ from $f$. Let $\rad = \frac{\cost_0(\clients,\solution_{\factortwo})}{\factortwo n}$, and $\psi=\ceil{\log(\factortwo n)}$.

For a facility $f \in \solution_{\factortwo}$ and set $\clients_f$, we further partition $\clients_f$ into smaller sets called as {\em rings} such that points in each ring have similar distances to $f$, i.e.,

\begin{equation}
 \clients_{f,j} =\begin{cases}
    \ball(f,\rad), & \text{if $j=0$}.\\
    \ball(f,2^j\rad) \setminus \ball(f,2^{j-1}\rad), & \text{if $1 \le j \le \psi$}.
  \end{cases} 
\end{equation}

Let $s= \frac{a\factortwo^2}{\epsilon^3}(\outliers + k \ln n)$ where $a$ is a large enough constant. We sample weighted points $\wtset^f$ from $\clients_f$ as follows: for every set $\clients_{f,j} \subseteq \clients_f$, if $|\clients_{f,j}| \leq s$ then for every client $c \in \clients_{f,j}$ add $(c,1)$ to $\wtset^f$. Else, sample $s$ random clients in $S_{f,j} \subseteq \clients_{f,j}$ without replacement. For every client $c \in S_{f,j}$, add $(c,\frac{|\clients_{f,j}|}{s})$ to $\wtset^f$.

Applying the above procedure on every cluster $f \in \solution_{\factortwo}$ we get,  $\wtset = \cup_{f \in \solution_{\factortwo}} \wtset^f$ as the weighted set of points. As the number of rings is $\log n$ and the number of points in each ring is at most $\frac{a\factortwo^2}{\epsilon^3}(\outliers + k \ln n)$, the total number of weighted points, $|\wtset| = O\lr{\frac{(km \log n)^2}{\epsilon^3}}$, since $a$ and $\factortwo$ are constants. Without loss of generality, we assume the weights are integral, by a slight modification of the construction above, which can increase the number of points in $\wtset$ by at most a factor of $2$. See, e.g., ~\cite{chen2009coresets,agrawal2023kMO} for details. 
%For a facility $f \in \solution_{\factortwo}$ and set $\clients_f$, use \Cref{alg} to further divide $\clients_f$ into smaller sets called as {\em rings} and sample some weighted points $\wtset^f$. %..\textcolor{blue}{I have suddenly shifted to center.. may be I should have done it earlier..} 
%Applying \Cref{alg} on every cluster $f \in \solution_{\factortwo}$ we get,  $\wtset = \cup_{f \in \solution_{\factortwo}} \wtset^f$ as the weighted set of points. \textcolor{blue}{WLOG, we can assume all the weights to be integers. If they are not integers, they can handled using standard techniques used in~\cite{chen2009coresets,agrawal2023kMO}. We leave the details of the same to the full version of paper.}

%\begin{algorithm}
%\caption{Weighted points for cluster centered at facility $f$}\label{alg}
%\begin{algorithmic}[1]
%\State $\wtset^{f} \gets Null$
%\For{$j=0~to~\pi$}
%\If{$j==0$}
 %   \State $\clients_{f,j} \gets \ball(f,\rad)$
%\ElsIf{$j \geq 1$}
 %   \State $\clients_{f,j} \gets \ball(f,2^j\rad) \setminus \ball(f,2^{j-1}\rad)$
%\EndIf
%\State $s= \frac{a\factortwo^2}{\epsilon^2}(\outliers + k \ln n \textcolor{red}{+ \ln(1/\lambda)})$ \Comment{$a$ is a large enough constant}
%\If{$\clients_{f,j} \leq s$}
%\State $\forall c \in \clients_{f,j}$ add $(c,1)$ to $\wtset^f$
%\ElsIf{$\clients_{f,j} > s$}
%\State sample $s$ random clients in $\clients_{f,j}$ without replacement (say $S_{f,j}$)
%\State $\forall c \in S_{f,j}$ add $(c,\frac{|\clients_{f,j}|}{s})$ to $\wtset^f$
%\EndIf
%\EndFor
%\end{algorithmic}
%\end{algorithm}

Our key technical contribution is to show the following lemma (\Cref{lem:coreset-multiring}), which states that for all feasible sets $\solution \subseteq \facilities$, the cost of assignment of the weighted sample to $\solution$, after excluding outliers with total weight $\outliers$, is close to the cost of assigning original set of clients, after excluding $\outliers$ outliers to $\solution$. This is precisely the definition of $\epsilon$-coresets, which have been extensively studied, especially in the context of (uncapacitated) \kmed and \kmeans (see \cite{Cohen-AddadSS21} and references therein). Thus, \Cref{lem:coreset-multiring} implies the existence of small-sized coreset that handles both capacities and outlier constraints; however unlike the coreset literature, in this work our focus is not on optimizing the size of the coreset. The proof of the lemma is deferred to \Cref{sec:singlering} and \Cref{sec:multiring-full} for ease of understanding.

\begin{lemma}
\label{lem:coreset-multiring}
For all feasible sets $\solution \subseteq \facilities$ of size $k$, $|\wcost_{\outliers}(\wtset,\solution)-\cost_{\outliers}(\clients,\solution)| \leq \epsilon \cost_m(C, F)$ with probability at least $1-1/n$.%, where $\consta \ge 1$ is a constant that can be chosen.
\end{lemma}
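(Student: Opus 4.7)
The plan is to fix a candidate solution $\solution \subseteq \facilities$ of size at most $k$, analyze the sampling error ring-by-ring, and conclude via a Hoeffding-style concentration bound combined with a flow-rerouting argument built on \Cref{lem:MCFO}. Concretely, let $\assign^*$ be any optimal assignment witnessing $\cost_\outliers(\clients,\solution)$, with outlier set $\clients^* \subseteq \clients$ of size at most $\outliers$. For each ring $R := \rings{f}{j}$ with $|R| > s$, $\assign^*$ induces a partition $R = R_1 \sqcup \cdots \sqcup R_k \sqcup R_o$, where $R_i$ consists of the clients assigned by $\assign^*$ to the $i$-th facility of $\solution$ and $R_o := R \cap \clients^*$; rings with $|R| \le s$ are copied verbatim into $\wtset$ and contribute zero sampling error, so they can be ignored henceforth.

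Because $\samples{f}{j}$ is drawn uniformly without replacement from $R$, for each of the $k+1$ parts the number of sampled elements $X_i$ (resp.\ $X_o$) satisfies $\bigl|X_i - s\,|R_i|/|R|\bigr| \le \delta s$ with probability at least $1 - 2e^{-\Omega(\delta^2 s)}$ by a Hoeffding-Serfling bound. Setting $\delta = \Theta(\epsilon/\factortwo)$ and using $s = \Theta(\factortwo^2(\outliers + k\ln n)/\epsilon^3)$ as in the construction makes this per-part failure probability $n^{-\Omega(k)}$, which is small enough to absorb a union bound over all $\binom{|\facilities|}{k} \le n^k$ choices of $\solution$, all $O((k+\outliers)\log n)$ rings, and all $k+1$ parts per ring, while still leaving a total failure probability below $1/n$.

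Next, I would use the per-part deviations to build a feasible weighted assignment of $\wtset$ to $\solution$ and bound its cost. The ``target'' weighted assignment from ring $R$ would send weight $|R_i|$ to the $i$-th facility of $\solution$ and declare weight $|R_o|$ as outlier from this ring, mirroring $\assign^*$. The sample actually carries weight $X_i \cdot |R|/s$ towards the $i$-th facility, producing a per-part surplus/deficit of at most $\delta |R|$. I would resolve these discrepancies by routing flow through the ring center $f$: since every $c \in R$ satisfies $\dist(c,f) \le 2^j \rad$, each unit of rerouted weight costs at most $2 \cdot 2^j \rad$ by triangle inequality, and an integral solution to the resulting \mcfo instance exists by \Cref{lem:MCFO}. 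Summing per-ring rerouting costs yields $O(\delta) \cdot \sum_{f,j} 2^j \rad \cdot |R|$, and since each $c \in \rings{f}{j}$ for $j \ge 1$ lies at distance at least $2^{j-1}\rad$ from its center in $\solution_\factortwo$ (while the $j=0$ rings contribute only $O(\delta n\rad) = O(\delta/\factortwo)\cdot\cost_0(\clients,\solution_\factortwo)$ by the definition of $\rad$), this totals $O(\delta \factortwo)\cdot\cost_0(\clients,\solution_\factortwo) \le O(\delta \factortwo^2)\cdot\opt(\cI) \le \epsilon \cdot \cost_\outliers(\clients,\solution)$, where the first inequality is \eqref{eqn:alpha-opt-bd} and the second uses \Cref{lem:bd-optI}. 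This establishes $\wcost_\outliers(\wtset,\solution) \le (1+\epsilon)\cost_\outliers(\clients,\solution)$; the matching lower bound follows by the symmetric construction, starting from an optimal weighted assignment for $\wcost_\outliers(\wtset,\solution)$ and rerouting back to an integral assignment on $\clients$, again through ring centers using \Cref{lem:MCFO}.

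The main obstacle is that the outlier portion $R_o$ of each ring is not determined locally: the global optimal choice of $\outliers$ outliers couples all rings together, so we cannot fix $R_o$ in advance. The resolution is to observe that the rerouting cost of $2 \cdot 2^j\rad$ per unit of flow through $f$ depends only on the ring radius, not on \emph{which} clients of $R$ are declared outliers. Consequently, the concentration argument for the outlier part is a pure ``cardinality deviation'' statement about $|X_o - s|R_o|/|R||$, and the resulting rerouting bound holds uniformly over every subset of $R$ of weight at most $\outliers$, generalizing the $\kmo$ argument of Agrawal et al.\ beyond the ``$\outliers$ farthest'' case and making it compatible with capacity constraints. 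A secondary subtlety is that different $\solution$'s may induce different partitions on the same ring, but this is absorbed by the $n^k$ factor in the union bound, which our choice of $s$ was designed to tolerate.
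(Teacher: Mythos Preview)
Your upper-bound direction (building a feasible weighted assignment from the optimal integral assignment $\sigma^*$ and bounding rerouting cost through ring centers) is essentially the content of the paper's \Cref{lem:gx2}, so that half is on the right track. The genuine gap is the lower bound $\wcost_{\outliers}(\wtset,\solution)\ge (1-\epsilon)\cost_{\outliers}(\clients,\solution)$, which you dismiss as ``the symmetric construction''. It is not symmetric. In the forward direction you fixed $\sigma^*$ \emph{before} sampling, so the partition $R=R_1\sqcup\cdots\sqcup R_k\sqcup R_o$ is deterministic and Hoeffding applies to it. In the reverse direction you propose to start from the optimal weighted assignment $\tau^*$ for $\wtset$, but $\tau^*$ is a function of the random sample; there is no fixed partition of $R$ on which your per-part concentration speaks. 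Worse, even granting the sample, turning $\tau^*$ into an integral assignment of all of $\clients$ forces you to place every one of the $|R|-s$ unsampled clients somewhere, and routing them through the ring center costs $\Theta(R_j)$ each, giving total error $\Theta\bigl(\sum_R |R|\,R_j\bigr)=\Theta(\cost_0(\clients,\solution_\factortwo))$, which is $\Theta(\opt)$ with no $\epsilon$ factor. So the ``symmetric'' rerouting gives only a constant-factor bound, not $(1-\epsilon)$.

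The paper closes this gap by a different mechanism: it works with the \mcfo cost function $\func(\vecv)$ directly, shows it is $R_j$-Lipschitz in $\ell_1$ (\Cref{lem:part1}), and applies a Chernoff bound for Lipschitz functions to get two-sided concentration $|\func(\vectorx)-\ex[\func(\vectorx)]|\le \tfrac{\epsilon}{2}NR_j$ without ever fixing an assignment. The missing inequality $\cost_{\outliers}(\clients,\solution)=\func(\vecone)\le \ex[\func(\vectorx)]$ then comes from a convexity-type argument (\Cref{sec:pf-lem1}): averaging the optimal flows for each outcome of $\vectorx$ yields a feasible flow for $\flow(\vecone)$. Together these give the lower bound you are missing. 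Your per-part count concentration simply does not contain enough information to recover this direction; you would need, at minimum, concentration of the \emph{cost} random variable itself, and that is precisely what the Lipschitz argument provides.
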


Assuming \Cref{lem:coreset-multiring} holds, let us return to our algorithm to see how it can be used to reduce \ckmo to \ckm. Recall that $\wtset$ is the set of weighted clients returned by our sampling algorithm and let $\wtpoints$ be the set of clients (i.e., the set of first elements from each pair $(c, \wt(c)) \in \wtset$, denoting a weighted client). For any subset $\wtout \subseteq \wtpoints$, let $\wt(\wtout) = \sum_{c_i \in \wtout} \wt(c_i)$.
The algorithm then proceeds as follows: we iterate over each guess $\wtout \subseteq \wtpoints$ of size at most $\outliers$. First, we check whether $\wt(\wtout) \ge \outliers$ -- if not, we continue to the next guess. Now, suppose $\wt(\wtout) \ge \outliers$. Then, let us order the points in $\wtout$ as $c_1, c_2, \ldots, c_{\outliers'}$, where $\outliers' = |\wtout|$. For each $c_i \in \wtout$, we guess an integer $0 < z(c_i) \le \min\LR{\outliers, \wt(c_i)}$, such that $\sum_{c_i \in \wtout} z(c_i) = \outliers$. Note that the number of guesses is at most $(\outliers')^{\outliers} \le \outliers^{\outliers}$. Now, fix one such guess $\vecz = (z(c_1), z(c_2), \ldots, z(c_\outliers'))$. Now, we define a new weight function $w_{T, \vecz}: \wtpoints \to \nat$ as follows. $$\wt_{T, \vecz}(c) = \begin{cases} \wt(c) &\text{ if } c \not\in \wtout
\\\wt(c_i) - z(c_i) &\text{ if } c = c_i \in \wtout
\end{cases}$$
Let $\wtset_{T, \vecz} = \LR{ (c, \wt_{T, \vecz}(c)) : c \in \wtpoints }$ denote the resulting set of weighted points, where the set of points is the same as that in $\wtset$, but the weight function is $\wt_{T, \vecz}$ instead of $\wt$. Now, the algorithm uses a $\factor$-approximation algorithm for the \wckm\footnote{Existing $\factor$-approximation algorithms for unweighted \ckm can be used to solve weighted instances by creating multiple co-located copies of each weighted client.} instance $\IwO = ((\points,\dist),\wtset_{T, \vecz},\facilities,k, \outliers_0 = 0)$ to find a $\solution \subseteq \facilities$ of size at most $k$ and an assignment $\assign : \clients \times \solution \rightarrow \nat$ satisfying the first three properties in \Cref{def:wckmo}. Then, we use \mcfo to compute the optimal assignment cost from $\clients$ to $\solution$ with $\outliers$ outliers, and the corresponding assignment $\assign$. Finally, the algorithm returns a solution $(\solution^*, \assign^*)$ of the minimum cost, over all guesses (i.e., guesses for $\wtout$ as well as $\vecz$). First, in \Cref{lem:runtime} we analyze the running time of the algorithm, and then in \Cref{lem:finalcost}, we analyze the approximation guarantee.

\begin{restatable}{lemma}{runtime} \label{lem:runtime}
    Let $T(|I'|, k')$ denote the running time of the $\factor$-approximation used to solve an instance $I'$ of \textsc{Capacitated $k'$-Median}. Then, the running time of our algorithm is upper bounded by $f(k, m, \epsilon) \cdot T(|\cI|, k) \cdot |\cI|^{O(1)}$.
\end{restatable}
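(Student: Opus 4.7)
The plan is to separately account for (i) the preprocessing, (ii) the number of outer iterations over guesses $(\wtout, \vecz)$, and (iii) the work inside each such iteration; the total running time is then (i) plus (ii) times (iii).

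For (i), solving $\IwO$ via a polynomial-time constant-factor $k$-\textsc{Median} approximation, performing the ring partition, and drawing the uniform samples are each $|\cI|^{O(1)}$ time. By the earlier construction, $|\wtset| = O((km \log n)^2 / \epsilon^3)$, and one easily checks that $\sum_{c} \wt(c) = n$: whether a ring $\clients_{f,j}$ is sub-sampled or copied in full, it contributes total weight exactly $|\clients_{f,j}|$.

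For (ii), the number of subsets $\wtout \subseteq \wtpoints$ of size at most $\outliers$ is bounded by $(|\wtset|+1)^\outliers$, and the excerpt already bounds the number of $\vecz$ guesses per $\wtout$ by $\outliers^\outliers$. The total iteration count is thus
\[
(|\wtset|+1)^\outliers \cdot \outliers^\outliers \;=\; (km\log n/\epsilon)^{O(\outliers)} \;=\; (km/\epsilon)^{O(\outliers)} \cdot (\log n)^{O(\outliers)}.
\]
The $(\log n)^{O(\outliers)}$ factor is absorbed into FPT form via the standard inequality $\outliers \log \log n \le \outliers^2/2 + (\log \log n)^2/2 \le \outliers^2/2 + (\log n)/2$ (valid for all $n$ above a fixed constant; smaller $n$ contribute only a constant into the function of $\outliers$). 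Hence the iteration count is at most $f_1(k, \outliers, \epsilon) \cdot |\cI|^{O(1)}$.

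For (iii), each iteration computes $\wt_{\wtout, \vecz}$ in polynomial time, then converts the weighted \wckm instance into an unweighted \ckm instance by creating co-located copies. The key observation here is that the total weight is at most $n$, so the resulting instance has size $O(|\cI|)$; running the $\factor$-approximation on it therefore takes at most $T(|\cI|, k) \cdot |\cI|^{O(1)}$ time. The final call to \mcfo to determine the $\outliers$ outliers in the original $\clients$ and to produce the assignment is polynomial by \Cref{lem:MCFO}. Multiplying the iteration count by this per-iteration cost yields the claimed $f(k, \outliers, \epsilon) \cdot T(|\cI|, k) \cdot |\cI|^{O(1)}$ bound. The only mildly non-obvious step is absorbing $(\log n)^{O(\outliers)}$ into FPT form; everything else is direct counting.
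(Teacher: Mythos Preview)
Your proof is correct and follows essentially the same decomposition as the paper: polynomial preprocessing, $(km\log n/\epsilon)^{O(\outliers)}\cdot \outliers^{\outliers}$ guesses, and per-guess work of $T(|\cI|,k)\cdot |\cI|^{O(1)}$ via the $\factor$-approximation plus \mcfo. The only cosmetic difference is in how the $(\log n)^{O(\outliers)}$ factor is absorbed: the paper uses the standard case split on whether $\outliers \le \log n/\log\log n$, whereas you use the AM--GM-style bound $\outliers\log\log n \le \outliers^2/2 + (\log\log n)^2/2$; both yield $f(k,\outliers,\epsilon)\cdot n^{O(1)}$. If anything, you are slightly more explicit than the paper in justifying that the unweighted \ckm instance obtained by copying weighted clients has size $O(|\cI|)$ (since total weight is $n$), which the paper leaves implicit.
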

\begin{proof}
	Note that construction of the set $W$ takes polynomial time, since we use $\factortwo$-approximation for \textsc{$k+m$-Median} as a starting point.
	Next, we bound the total number of guesses tried by the algorithm. First, we guess a subset $\wtout \subseteq \wtset$ of size at most $\outliers$, which takes time $\binom{|\wtset|}{m} \le \lr{\frac{km \log n}{\epsilon}}^{O(m)}$, which can be upper bounded by $\lr{\frac{km}{\epsilon}}^{O(m)} \cdot n^{O(1)}$ via a standard case analysis on whether $m \le \frac{\log n}{\log \log n}$. Next, for each such guess $\wtout$, the algorithm guesses the vector $\vecz = (z(p_1), z(p_2), \ldots, z(p_{\outliers'}))$. As remarked earlier, the number of such guesses is upper bounded by $\outliers^{\outliers}$. Finally, for each such guess, we use a $\gamma$-approximation for \ckm, which takes time $T(|\cI|, k)$, and for the solution $\solution \subseteq \facilities$ returned by the algorithm, we use \mcfo to compute the cost w.r.t. the original set $\clients$ which runs in polynomial time due to \Cref{lem:MCFO}. Thus, the claimed bound on the running time follows.
\end{proof}

Now, we analyze the cost of this solution.

\begin{restatable}{lemma}{finalcost} \label{lem:finalcost}
    Let $\solution^* \subseteq \facilities$ be the set of at most $k$ facilities returned by the algorithm. Then, with probability at least $1-1/n$, it holds that for any feasible set $\solution \subseteq \facilities$ of size at most $k$, 
    $$\cost_{\outliers}(\clients, \solution^*) \le \factor \cdot \frac{1+\epsilon}{1-\epsilon} \cdot \cost_{\outliers}(\clients, \solution).$$
\end{restatable}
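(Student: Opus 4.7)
My plan is to exhibit, for any feasible $\solution \subseteq \facilities$ of size at most $k$, a single iteration of the algorithm at which the $\factor$-approximation for \wckm already returns a set $\solution'$ satisfying $\cost_\outliers(\clients, \solution') \leq \factor \cdot \frac{1+\epsilon}{1-\epsilon} \cost_\outliers(\clients, \solution)$. Since $\solution^*$ is selected to minimize $\cost_\outliers(\clients, \cdot)$ over all iterations, the advertised bound on $\solution^*$ will follow immediately.

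To identify the good guess, I will read it off the optimal $\outliers$-outlier certificate for $\wcost_\outliers(\wtset, \solution)$: it assigns an integral outlier weight $z^*(c_i) \in \{0, \ldots, \wt(c_i)\}$ at each $c_i \in \wtpoints$, with $\sum_i z^*(c_i) = \outliers$. Set $\wtout^* := \{c_i : z^*(c_i) > 0\}$ and let $\vecz^*$ record the nonzero $z^*(c_i)$'s. One then checks that $(\wtout^*, \vecz^*)$ satisfies every condition in the algorithm's enumeration -- in particular $|\wtout^*| \leq \outliers$, $\wt(\wtout^*) \geq \outliers$, each $z^*(c_i) \in \{1, \ldots, \min\{\outliers, \wt(c_i)\}\}$, and $\sum_i z^*(c_i) = \outliers$ -- so this exact guess is considered by the algorithm. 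By the construction of $\wt_{T, \vecz}$, we obtain $\wcost_0(\wtset_{\wtout^*, \vecz^*}, \solution) = \wcost_\outliers(\wtset, \solution)$ directly.

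Let $\solution'$ be the $\factor$-approximate \wckm solution returned at this iteration. Feasibility of $\solution'$ on the original instance $\cI$ is automatic: since $\sum_{c \in \wtpoints} \wt(c) = n$, the total capacity of $\solution'$ is at least $n - \outliers$, so the \mcfo call on $\clients$ produces a valid $\cost_\outliers(\clients, \solution')$, and minimality gives $\cost_\outliers(\clients, \solution^*) \leq \cost_\outliers(\clients, \solution')$. The proof is then the chain
\begin{align*}
(1-\epsilon)\cost_\outliers(\clients, \solution')
&\leq \wcost_\outliers(\wtset, \solution')
\leq \wcost_0(\wtset_{\wtout^*, \vecz^*}, \solution') \\
&\leq \factor \cdot \wcost_0(\wtset_{\wtout^*, \vecz^*}, \solution)
= \factor \cdot \wcost_\outliers(\wtset, \solution)
\leq \factor(1+\epsilon)\cost_\outliers(\clients, \solution),
\end{align*}
using \Cref{lem:coreset-multiring} at the first and last steps, the fact that $(\wtout^*, \vecz^*)$ is a feasible (though perhaps suboptimal) outlier pattern on $\wtset$ against $\solution'$ at the second step, and the $\factor$-approximation guarantee at the third. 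Dividing by $1-\epsilon$ yields the bound.

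The only non-routine step is producing $(\wtout^*, \vecz^*)$ and verifying that this particular guess is enumerated; the rest is a mechanical chain. A subtle point worth highlighting is that \Cref{lem:coreset-multiring} must be invoked not only at the fixed $\solution$ but also at the data-dependent $\solution'$. This is legitimate because the lemma's ``for all feasible $\solution$'' quantifier means both applications are covered by a single $1 - 1/n$ event, so no union bound is needed.
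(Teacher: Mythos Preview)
Your proof is correct and follows essentially the same route as the paper's: identify the guess $(\wtout^*, \vecz^*)$ from the optimal $\outliers$-outlier assignment of $\wtset$ to $\solution$, then chain the coreset bound, the $\factor$-approximation, and the coreset bound again. In fact you are slightly more careful than the paper in two places: you correctly write $\wcost_\outliers(\wtset,\solution') \le \wcost_0(\wtset_{\wtout^*,\vecz^*},\solution')$ as an inequality (the paper states it as an equality, which need not hold since $(\wtout^*,\vecz^*)$ is tailored to $\solution$, not $\solution'$), and you explicitly verify that $\solution'$ is feasible for $\cI$ and that both invocations of \Cref{lem:coreset-multiring} fall under one good event. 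One tiny omission: the optimal assignment realizing $\wcost_\outliers(\wtset,\solution)$ a priori has total outlier weight \emph{at most} $\outliers$, not exactly $\outliers$; but one can always pad it to exactly $\outliers$ without increasing cost, so $\sum_i z^*(c_i)=\outliers$ can indeed be assumed.
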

\begin{proof}
	The statement in \Cref{lem:coreset-multiring} holds with probability at least $1-1/n$. We condition on this good event, and show that the current lemma holds (with probability $1$). Fix a feasible set $\solution \subseteq \facilities$ of size at most $k$, as in the statement of the lemma. Then, by \Cref{lem:coreset-multiring}, we have the following inequality:
	\begin{equation}
		(1-\epsilon) \cdot \cost_{\outliers}(\clients, \solution) \le \wcost_{\outliers}(\wtset, \solution) \le (1+\epsilon) \cdot \cost_{\outliers}(\clients, \solution). \label{eqn:coresetbound1}
	\end{equation}
	Now, consider the assignment $\sigma$ realizing $\wcost_{\outliers}(\wtset, \solution)$, and let $T \subseteq \wtpoints$ be the set of clients $c$, such that $\sum_{f \in F} \sigma(c, f) < w(c)$. Note that $z(c) \coloneqq w(c) - \sum_{f \in F} \sigma(c, f)$ is an integer for every $c \in \wtpoints$. Let $\vecz = (z(c_1), z(c_2), \ldots, z(c_{\outliers}'))$, where $T = \LR{c_1, c_2, \ldots, c_{\outliers'}}$ is indexed arbitrarily. 
	
	It is easy to verify that $(T, \vecz)$ as defined here, satisfy the conditions of a ``guess'' as in the algorithm. Thus, consider the iteration of the algorithm corresponding to $T$ and $\vecz$. It follows that $\wcost_{\outliers}(\wtset, \solution) = \wcost_0(\wtset_{T, \vecz}, \solution)$. Let $\solution' \subseteq \facilities$ be the $\factor$-approximation found for the \wckm instance $\IwO = ((\points,\dist),\wtset_{T, \vecz},\facilities,k, 0)$ in this iteration. It follows that,
	\begin{equation}
		\wcost_{\outliers}(\wtset, \solution') = \wcost_0(\wtset_{T, \vecz}, \solution') \le \factor \cdot \opt(\cI) \le \factor \cdot \wcost_0(\wtset_{T, \vecz}, \solution) \label{eqn:approxbound1}
	\end{equation}
	Here, $\opt(\cI)$ denotes the cost of an optimal solution for the \ckmo instance $\cI$, and since $\solution$ is a feasible set of at most $k$ facilities, the inequality $\opt(\cI) \le \wcost_0(\wtset_{T, \vecz}, \solution)$ follows. 
	
	Now, again from \Cref{lem:coreset-multiring}, we have the following inequality
	\begin{equation}
		(1-\epsilon) \cdot \cost_{\outliers}(\clients, \solution') \le \wcost_{\outliers}(\wtset, \solution') \le (1+\epsilon) \cdot \cost_{\outliers}(\clients, \solution') \label{eqn:coresetbound2}
	\end{equation}
	Finally, $\solution^* \subseteq \facilities$ is the best solution found over all iterations, it follows that
	\begin{align}
		\cost_{\outliers}(\clients, \solution^*) \le \cost_{\outliers}(\clients, \solution') &\le \frac{1}{1-\epsilon} \cdot \factor \cdot \wcost_0(\wtset_{T, \vecz}, \solution) \tag{From (\ref{eqn:approxbound1}) and (\ref{eqn:coresetbound2}}
		\\&=  \frac{1}{1-\epsilon} \cdot \factor \cdot \wcost_{\outliers}(\wtset, \solution) \tag{By definition of $T, \vecz$}
		\\&\le \factor \cdot \frac{1+\epsilon}{1-\epsilon} \cdot \cost_{\outliers}(\clients, \solution) \tag{From (\ref{eqn:coresetbound1})}.
	\end{align}
\end{proof}

\Cref{lem:runtime} and \Cref{lem:finalcost} together yield a proof of the following theorem, which is the formal version of \Cref{thm:kmed-approx-preserving}.

\begin{theorem}\label{thm:kmed-approx-preserving-formal}
    Given a $\factor$-approximation for \textsc{Capacitated $k'$-Median} that runs in time $T(|I'|, k')$ on an input instance $I'$, we can obtain a $(\gamma+\epsilon)$-approximation for an instance $\cI$ of \ckmo in time $f(k, m, \epsilon) \cdot T(|\cI|, k) \cdot |\cI|^{O(1)}$.
\end{theorem}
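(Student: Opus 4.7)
The plan is to derive this theorem as an essentially immediate corollary of \Cref{lem:finalcost} and \Cref{lem:runtime}, with a single rescaling step to convert the multiplicative error $\factor \cdot \frac{1+\epsilon}{1-\epsilon}$ into the additive form $\factor + \epsilon$ requested by the theorem statement. Concretely, I would run the algorithm of Section~3 with an internal accuracy parameter $\epsilon' \coloneqq \epsilon / (3\factor)$ in place of $\epsilon$; assuming without loss of generality that $\epsilon \le 1$ (the ratio $\factor+\epsilon$ is only meaningful in this regime), a one-line calculation shows $\factor \cdot \tfrac{1+\epsilon'}{1-\epsilon'} = \factor + \factor \cdot \tfrac{2\epsilon'}{1-\epsilon'} \le \factor + \epsilon$.

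For correctness, I would invoke \Cref{lem:finalcost} with parameter $\epsilon'$: with probability at least $1-1/n$, the output $\solution^\ast$ satisfies $\cost_\outliers(\clients,\solution^\ast) \le \factor \cdot \tfrac{1+\epsilon'}{1-\epsilon'} \cdot \cost_\outliers(\clients,\solution)$ for every feasible $\solution \subseteq \facilities$ of size at most $k$, and in particular for an optimal such $\solution$. Chaining this with the algebraic bound above gives $\cost_\outliers(\clients,\solution^\ast) \le (\factor+\epsilon) \cdot \opt(\cI)$, as required. The $1/n$ failure probability is swept into the standard high-probability statement (and could be boosted by independent repetitions if desired).

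For the running time, I would apply \Cref{lem:runtime} with the parameter $\epsilon'$, which yields a bound of the form $f(k,m,\epsilon') \cdot T(|\cI|,k) \cdot |\cI|^{O(1)}$. Since $\epsilon' = \epsilon/(3\factor)$ and $\factor$ is a fixed constant determined by the assumed \ckm subroutine, we may absorb the rescaling into a new function $f'$ of $(k,m,\epsilon)$, recovering the stated bound $f'(k,m,\epsilon) \cdot T(|\cI|,k) \cdot |\cI|^{O(1)}$. The polynomial overhead per iteration is justified by the polynomial-time solvability of \mcfo established in \Cref{lem:MCFO}, which is invoked both to evaluate $\cost_\outliers(\clients,\solution)$ for candidate solutions and implicitly inside the guess-enumeration step.

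There is no substantive obstacle at this stage: the genuinely difficult parts of the argument, namely the coreset guarantee of \Cref{lem:coreset-multiring} (deferred to later sections) and the polynomial-time reduction from \mcfo to \mcf in \Cref{lem:MCFO}, have already been handled. The present theorem is a straightforward combination of \Cref{lem:finalcost} and \Cref{lem:runtime} together with the rescaling of $\epsilon$ described above; the only minor care needed is to verify that the $(T,\vecz)$ guess corresponding to the optimum of $\wcost_\outliers(\wtset,\solution)$ is indeed enumerated by the algorithm, which is already observed in the proof of \Cref{lem:finalcost}.
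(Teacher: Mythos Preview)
Your proposal is correct and follows the same approach as the paper, which simply states that \Cref{lem:runtime} and \Cref{lem:finalcost} together yield the theorem. You have in fact been more careful than the paper by making explicit the rescaling $\epsilon' = \epsilon/(3\factor)$ needed to pass from the multiplicative bound $\factor\cdot\frac{1+\epsilon}{1-\epsilon}$ of \Cref{lem:finalcost} to the additive form $\factor+\epsilon$, a step the paper leaves implicit.
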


To complete the proof of \Cref{thm:kmed-approx-preserving-formal}, it remains to prove \Cref{lem:coreset-multiring}. First, in \Cref{sec:singlering}, we prove \Cref{lem:coreset-multiring} in a special case. Then, in \Cref{sec:multiring-full}, we show how this analysis can be extended to the more general case.

\section{Single Ring Case} \label{sec:singlering}
In this section, we will prove a version of \Cref{lem:coreset-multiring} in the following special case that already illustrates most of the technical details that will be later used to prove the more general case. 

Consider an arbitrary ring $\ringf{j}$ in a cluster $\clusterf$ centered at a facility $f \in \solution_{\factortwo}$ that will remain fixed throughout this section. We assume that the algorithm performs sampling only inside $\ringf{j}$; whereas all the clients in rings different from $\ringf{j}$ are all included into $\wtset$ with weight 1. Note that this case can occur in the actual algorithm, if, all rings other than $\ringf{j}$ contain fewer than $s$ points. Let $\radr=2^j\rad$ be the radius of this ring. Let $N=|\ring|$ be the number of clients in the ring. Let $\wtset = \cup_{f}\wtset^f$ be the weighted clients returned by our algorithm. In the remaining section we will prove the following lemma.
%Note that we are assuming the special case when for all the clients in $\clients \setminus \ring$, the weight is $1$. %\textcolor{red}{Sampled clients $\sample \subseteq \ring$ have weights strictly greater than 1 if $|\ring| > s$.} 
%\textcolor{blue}{Thus, $\wtset = S_{f,j} \cup (\clients \setminus \ring) $...am I right?}

\begin{lemma}
\label{lem:coreset}
For any feasible $\solution \subseteq \facilities$ of size $k$, $|\wcost_{\outliers}(\wtset,\solution)-\cost_{\outliers}(\clients,\solution)| \leq \epsilon \constb N \radr$ with probability $1-n^{-(k+\consta)}$ where $\consta$ and $\constb$ are constants.
\end{lemma}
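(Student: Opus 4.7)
The plan is to fix $\solution = \{f_1,\ldots,f_k\} \subseteq \facilities$ throughout this sketch; the ``$k$'' in the exponent $n^{-(k+\consta)}$ is budgeted to absorb a later union bound over the $\binom{|\facilities|}{k} \le n^k$ choices of $\solution$, so I aim for failure probability at most $n^{-(k+\consta)}$ for the fixed $\solution$. Since $\wtset$ coincides with $\clients$ at unit weights outside the ring $\ring$, both $\wcost_\outliers(\wtset,\solution)$ and $\cost_\outliers(\clients,\solution)$ decompose into an identical outside-the-ring part and an inside-the-ring part, so only the latter needs analysis.

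For the upper bound $\wcost_\outliers(\wtset,\solution) \le \cost_\outliers(\clients,\solution) + \epsilon\constb N\radr$, let $\assign^*$ realize $\cost_\outliers(\clients,\solution)$; inside $\ring$ this induces an outlier set $\out$ of size $n^o \le \outliers$ and assigned sets $C^*_{f_i} = \{c \in \ring : \assign^*(c) = f_i\}$ of sizes $n^*_i$. Let $X_i = |\sample \cap C^*_{f_i}|$ and $X_o = |\sample \cap \out|$. I will build a feasible \wckmo assignment for $\wtset$ in two stages. The first stage is the ``proportional lift'': assign every $c \in \sample \cap C^*_{f_i}$ to $f_i$ with weight $N/s$ and declare every $c \in \sample \cap \out$ an outlier of weight $N/s$. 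The cost of this stage, $\sum_i \tfrac{N}{s} \sum_{c \in \sample \cap C^*_{f_i}} \dist(c,f_i)$, has expectation equal to the inside-the-ring part of $\cost_\outliers(\clients,\solution)$, and its deviation is a sum of $k$ bounded sampling sums (each summand lies in $[0,2\radr]$) and is at most $\tfrac{\epsilon\constb}{2} N\radr$ with high probability by Hoeffding. The second stage corrects the capacity and outlier-count violations (the lift may over- or under-use some $f_i$ and may over- or under-use the outlier budget) by solving one \mcfo instance that reroutes the excess flow through the ring centre $f$. By the triangle inequality each unit of rerouted flow costs at most $2\radr$, so the correction is bounded by $2\radr\bigl(\sum_{i=1}^{k} |X_i\tfrac{N}{s} - n^*_i| + |X_o\tfrac{N}{s} - n^o|\bigr)$.

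A Hoeffding-type inequality for sampling without replacement, applied to each of the $k+1$ counts $X_1,\ldots,X_k,X_o$ and to each of the $k$ cost deviations above, bounds the failure probability of every individual bad event by $n^{-(k+\consta+1)}$ once the constant $a$ in $s = (a\factortwo^2/\epsilon^3)(\outliers + k\ln n)$ is chosen large enough; a union bound over the $O(k)$ events then yields failure probability at most $n^{-(k+\consta)}$ overall. The matching lower bound $\wcost_\outliers(\wtset,\solution) \ge \cost_\outliers(\clients,\solution) - \epsilon\constb N\radr$ is proved symmetrically: start from an optimal \wckmo assignment for $\wtset$ and lift it to an assignment of $\clients$ by routing capacity and outlier discrepancies through $f$. \Cref{lem:MCFO} ensures the lifted transportation problem admits an integral optimum, so the cost accounting carries over verbatim.

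The main technical obstacle is that in the reverse direction the set of $\clients$-outliers forced by the sample need not resemble $\out$: it can be any weight-$\outliers$ subset dictated by the optimal \wckmo assignment. Hence the rerouting-through-$f$ argument must be robust to adversarially chosen outlier subsets, not just the $\outliers$ clients farthest from $\solution$ as in the outlier-only analysis of Agrawal et al.~\cite{agrawal2023kMO}. The key step, which we establish using the structural properties of \mcfo, is that the rerouting cost needed to correct any outlier mismatch of total weight at most $\outliers$ is uniformly bounded by $2\radr$ times that mismatch; this decouples the outlier correction from the capacitated correction of the $k$ sub-clusters and lets the two error contributions simply add to give the promised $\epsilon\constb N\radr$.
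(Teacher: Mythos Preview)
Your plan differs structurally from the paper's proof, and the difference matters for the probability bound. The paper does \emph{not} argue by building feasible solutions in both directions and applying Hoeffding to the $k+1$ counts $X_1,\dots,X_k,X_o$. Instead it encodes the problem as a single function $g:\real_+^N\to\real_+$ (the optimal \mcfo cost of an instance that has a reservoir of demand $N-\sum_c v_c$ at the ring centre $f$), shows $g$ is $\radr$-Lipschitz in $\ell_1$, and then splits the estimate into two pieces: a McDiarmid-style concentration $|g(\vectorx)-\ex[g(\vectorx)]|\le \tfrac{\epsilon}{2}N\radr$ with probability $1-n^{-(k+\consta)}$, and a deterministic comparison $|\ex[g(\vectorx)]-g(\vecone)|\le \epsilon\constb N\radr$. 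The second comparison uses convexity for one inequality and a constructive flow (your ``proportional lift plus rerouting'') for the other, but the latter is only established with probability $1-n^{-10}$; it is the Lipschitz concentration that produces the $n^{-(k+\consta)}$ bound needed for the later union bound over all $\solution$.

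Your per-count Hoeffding does not reach that probability with the stated sample size. If you ask that each of the $k+1$ deviations satisfies $|X_i\tfrac{N}{s}-n^*_i|\le \tfrac{\epsilon N}{c'(k+1)}$, the hypergeometric tail is $\exp\!\bigl(-\Theta(s\epsilon^2/k^2)\bigr)$, and with $s=\Theta\bigl((\outliers+k\ln n)/\epsilon^3\bigr)$ this is only $\exp\!\bigl(-\Theta(\ln n/(\epsilon k))\bigr)$, which is \emph{not} $n^{-(k+\consta)}$ unless the constant $a$ in $s$ is allowed to grow like $\epsilon k$. The Lipschitz route avoids this $k$-loss because it applies one concentration inequality to $g(\vectorx)$ as a whole rather than $k{+}1$ separate inequalities. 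A second, smaller slip: your summands $\dist(c,f_i)$ do \emph{not} lie in $[0,2\radr]$; they lie in an interval of length $2\radr$ around $\dist(f,f_i)$, which is why the paper's construction routes the deficit/excess through $f$ so that the unbounded $\dist(f,f_i)$ contributions cancel between the lift and the correction. Finally, the ``symmetric'' reverse direction is not truly symmetric (you would be lifting from a random sample-dependent assignment back to $\clients$); the paper sidesteps this entirely via the convexity inequality $g(\ex[\vectorx])\le \ex[g(\vectorx)]$ combined with the Lipschitz concentration. Your diagnosis of the main new obstacle---that the outlier correction must be robust to \emph{every} weight-$\outliers$ subset, not just the farthest points---is exactly right and matches the paper's contribution; but to turn your sketch into a proof at the stated sample size you would need to replace the per-count Hoeffding by the $\radr$-Lipschitz/McDiarmid argument.
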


%\textcolor{cyan}{Also, we will focus on a fixed feasible $\solution \subseteq \facilities$ of size $k$ to show that the bound in \Cref{lem:coreset} fails with probability. \textcolor{red}{$\leq n^{-(k+10)}$} (the constants can be tweaked so that the algorithm succeeds with high probability). Once this is done, we can simply take union bound over all possible set of feasible solutions size $k$ to prove \Cref{lem:coreset}.}

As stated in the lemma, we fix a feasible set $\solution \subseteq \facilities$. For the rest of the section we will assume $N > s = O\lr{\frac{km \log n}{\epsilon^3}}$, because otherwise the sampling does not change anything (recall that all clients outside $\ringf{j}$ already belong to $\wtset$ with weight $1$). For our analysis, we will define a random vector $\vectorx$ and a function $\func$ such that $\func(\vectorx)$ and $\wcost_\outliers(\wtset,\solution)$ are identically distributed. 

\textbf{Defining $\vectorx$:} a random vector $\vectorx \in \mathbb{R}_{+}^N$ is defined as follows: for each coordinate pick value $N/s$ with probability $s/N$ and 0 otherwise such that $\ex[\vectorx]=\vecone$. One can show that with sufficient probability, this sampling scheme
selects exactly $s$ points using the \Cref{claim:bin}.

\begin{claim}[\cite{Cohen-AddadL19Capacitated}]
\label{claim:bin}
   Let $b$ be a positive integer, and let $ p \in (0,1)$ such that $p b$ is an integer. The probability that Binomial($b,p$)= $pb$
 is $\Omega(1/\sqrt{b})$.
 \end{claim}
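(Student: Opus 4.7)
The plan is to prove this via a direct Stirling-approximation calculation applied to the point-mass of the binomial at its mean. Writing $k \coloneqq pb$ (an integer, with $1 \le k \le b-1$ since $p \in (0,1)$), the target quantity is
\[
\Pr[\text{Bin}(b,p) = k] \;=\; \binom{b}{k} p^{k}(1-p)^{b-k} \;=\; \binom{b}{pb}\, p^{pb}(1-p)^{(1-p)b}.
\]
My first step is simply to record this identity and note that, because $k = pb$ and $b-k = (1-p)b$, the exponential factor $p^{pb}(1-p)^{(1-p)b}$ is precisely $(k/b)^{k}\big((b-k)/b\big)^{b-k}$, i.e.\ the quantity that will cancel the leading exponential part of $\binom{b}{k}$ after we apply Stirling.

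Next, I invoke the two-sided Stirling inequality in its non-asymptotic form: for every positive integer $n$,
\[
\sqrt{2\pi n}\,(n/e)^{n} \;\le\; n! \;\le\; e\sqrt{n}\,(n/e)^{n}.
\]
Applying the lower bound to $b!$ and the upper bounds to $k!$ and $(b-k)!$, I obtain
\[
\binom{b}{k} \;\ge\; \frac{\sqrt{2\pi b}}{e^{2}\,\sqrt{k(b-k)}} \cdot \frac{b^{b}}{k^{k}(b-k)^{b-k}}.
\]
Multiplying both sides by $p^{k}(1-p)^{b-k}$ and using the cancellation from the previous paragraph gives
\[
\Pr[\text{Bin}(b,p) = k] \;\ge\; \frac{\sqrt{2\pi b}}{e^{2}\,\sqrt{k(b-k)}} \;=\; \frac{\sqrt{2\pi}}{e^{2}\,\sqrt{b\,p(1-p)}}.
\]

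Finally, since $p(1-p) \le 1/4$ for all $p \in (0,1)$, I conclude
\[
\Pr[\text{Bin}(b,p) = k] \;\ge\; \frac{\sqrt{2\pi}}{e^{2}\,\sqrt{b/4}} \;=\; \frac{2\sqrt{2\pi}}{e^{2}\sqrt{b}} \;=\; \Omega\!\left(\tfrac{1}{\sqrt{b}}\right),
\]
which is the desired bound. I do not expect any real obstacle here: the only thing to be careful about is that the cheap Stirling constants introduce an $e^{2}$ rather than the sharp $\sqrt{2\pi}$, but since the statement is only asymptotic this is irrelevant. The edge cases $k = 1$ and $k = b-1$ are covered automatically because the Stirling inequalities above hold for every positive integer, and the bound $p(1-p) \le 1/4$ is uniform in $p$, so no case analysis on the size of $p$ is needed.
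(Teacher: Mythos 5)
Your proof is correct: the identity $p^{pb}(1-p)^{(1-p)b} = k^k(b-k)^{b-k}/b^b$, the two-sided non-asymptotic Stirling bounds, and the final use of $p(1-p)\le 1/4$ all check out (including the edge cases $k=1$ and $k=b-1$), yielding $\Pr[\mathrm{Bin}(b,p)=pb]\ge \tfrac{2\sqrt{2\pi}}{e^2\sqrt{b}}$. The paper does not prove this claim itself but imports it from Cohen-Addad and Li, and your Stirling computation is essentially the standard argument used there, so there is nothing further to reconcile.
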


 Setting $b=N$ and $p=s/N$ in \Cref{claim:bin}, it follows that $\vectorx$ has exactly $N \cdot (s/N) = s$ non-zero entries with probability $\Omega(1/\sqrt{N})$. Conditioned on this event then, $\vectorx$ and $\wtset$ are identically distributed, i.e., the $\vectorx$ represents the outcome of our sampling algorithm. In the rest of the section, we analyze the unconditioned behavior of $\vectorx$, and show that the desired concentration (as in \Cref{lem:coreset}) holds with high probability. Then, a standard argument shows that \Cref{lem:coreset} also holds with high probability, even when we condition on the event that $\vectorx$ has exactly $s$ non-zero entries. 

\textbf{Defining Function $\func$:} to define $\func$, we first create an instance of \mcfo. Given a vector $\vecv$ of size $N$ where each entry in $\vecv$ corresponds to a client in $\ring$, a flow instance $\flow(\vecv)$ is created as follows: every client $c \in \ring$ has $\vecv_c$ units of demand, every client in $\clients\setminus\ring$ has $1$ unit of demand, cluster center $f$ has $N-\sum_{c \in \ring}\vecv_c$ (possibly negative\footnote{Negative demand $d$ at a vertex $v$ requires that $d$ units of flow must enter $v$, whereas a positive demand requires that the specified units of demand must exit the vertex.}) demand. Every facility $f' \in \solution$ has $\capacity{f'}$ units of supply. The number of outliers is $\outliers$. We define $\func(\vecv): \vecv \rightarrow \mathbb{R}_{+}$ as the cost of optimal flow of $\flow(\vecv)$. The flow instance we defined is feasible because the sum of demands is $|\clients|=n$ and the number of outliers is $\outliers$, making the demand to be served to be $n-\outliers$ which is feasible by assumption on $\solution$. 

Note that, $\func(\ex[\vectorx]) = \func(\mathbbm{1})$ is exactly $\cost_{\outliers}(\clients,\solution)$. Also, $\func(\vectorx)$ and $\wcost_\outliers(\wtset,\solution)$ are identically distributed. Now, we will prove \Cref{lem:coreset} in two steps: $(i)$ $\func(\vectorx) \approx \ex(\func(\vectorx))$ with high probability (proven in \Cref{lem:part1}) and $(ii)$ $\ex(\func(\vectorx)) \approx \func(\ex[\vectorx])$ (proven in \Cref{lem:part2}).
%\textcolor{blue}{We will prove Lemma 7 in three steps: (i)$ g(X) \approx E(g(X)) \approx g(E(X))$ whp...something like this (ii) $\func(\ex[\vectorx]) = \func(\mathbbm{1})$ is concentrated around $\cost_{\outliers}(\clients,\solution)$  (iii) $g(X) \approx wcost_m(W, F)$ ...am I right?}To prove \Cref{lem:coreset}, we want to show that $\func(\ex[\vectorx]) = \func(\mathbbm{1})$ is concentrated around $\cost_{\outliers}(\clients,\solution)$ which we show in two steps as stated in \Cref{lem:part1} and \Cref{lem:part2} respectively.

\begin{restatable}{lemma}{lempartone}
\label{lem:part1}
$|\func(\vectorx)-\ex[\func(\vectorx)]| \leq \epsilon N \radr /2$ with probability $\geq 1-n^{-(k+c)}$.
\end{restatable}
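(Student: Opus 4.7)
The plan is to establish concentration of $\func(\vectorx)$ around its expectation via a McDiarmid-style (bounded-differences) argument, sharpened to a variance-aware Bernstein-type tail bound so that the sample size $s$ given in the algorithm is large enough. The starting point is to view $\func(\vectorx)$ as a function of the $N$ independent coordinates $\vectorx_1, \ldots, \vectorx_N$, each taking value $N/s$ with probability $s/N$ and $0$ otherwise.

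The first (and main) step is a \emph{bounded differences} estimate: if $\vecv, \vecv' \in \real_+^N$ differ only in the $i$-th coordinate, then $|\func(\vecv) - \func(\vecv')| \le |\vecv_i - \vecv'_i| \cdot \radr$. I would prove this via a direct flow-rerouting argument. Given an optimal MCFO solution for $\flow(\vecv)$, note that changing $\vecv_i$ by $\delta$ simultaneously shifts the demand at $c_i$ by $+\delta$ and the demand at the cluster center $f$ by $-\delta$ (or vice versa). Rerouting exactly $\delta$ units of flow along the edge $(c_i, f)$ therefore restores feasibility, and since $c_i \in \ring$ with $d(c_i, f) \le \radr$, the cost of this rerouting is at most $\delta \radr$. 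Crucially, this preserves the outlier budget because MCFO treats the outlier slots as a dummy facility $\dummy$ (\Cref{lem:MCFO}); rerouting via $f$ (or if necessary via $\dummy$) keeps the total unassigned demand unchanged. Plugging in $|\vecv_i - \vecv'_i| \le N/s$ gives per-coordinate sensitivity $M = (N/s)\radr$.

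The second step is concentration. Each $\vectorx_i$ has variance $(N/s)(1 - s/N)\approx N/s$, so the per-coordinate contribution to the "total variance" is $\Theta((N/s)\radr^2)$ and summed over $N$ coordinates gives $V = \Theta(N^2 \radr^2 / s)$. A Bernstein-type McDiarmid inequality then yields
\[
\Pr\bigl[\,|\func(\vectorx) - \ex[\func(\vectorx)]| > t\,\bigr] \;\le\; 2\exp\!\left(-\,\Omega\!\left(\frac{t^2}{V + Mt}\right)\right).
\]
Plugging in $t = \epsilon N \radr / 2$, both terms in the denominator become $\Theta(N^2 \radr^2/s)$ (up to $\epsilon$ factors), so the exponent simplifies to $\Omega(\epsilon^2 s)$. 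Since $s = \frac{a\factortwo^2}{\epsilon^3}(\outliers + k\ln n)$ with $a$ sufficiently large, we get $\Omega(\epsilon^2 s) \ge (k+\consta)\ln n$, giving the claimed bound of $n^{-(k+\consta)}$.

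The main obstacle I expect is the careful execution of Step 1 in the presence of outliers. In the outlier-free setting, one simply reroutes flow along $(c_i,f)$ and appeals to triangle inequality. With outliers, if the demand at $c_i$ in the original optimal flow is partially served (with the unserved portion consuming outlier budget), then changing $\vecv_i$ may force us to either transfer outlier allocation between $c_i$ and $f$, or between $c_i$ and another facility. The MCFO viewpoint handles this uniformly: since $\dummy$ can be reached from every vertex at a fixed (large) cost, the same rerouting argument applies with $\dummy$ playing the role of an auxiliary sink, and the cost is controlled by $\delta \radr$. A secondary point is that the crude Hoeffding-style McDiarmid bound would require $s = \Omega(\sqrt{N k \ln n}/\epsilon)$, which is too expensive; it is essential to use the variance-aware form, which exploits the sparsity $\Pr[\vectorx_i \ne 0] = s/N$ of the sampling scheme.
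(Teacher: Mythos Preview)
Your proposal is correct and follows essentially the same approach as the paper: establish that $\func$ is $\radr$-Lipschitz in $\ell_1$ via the flow-rerouting argument (routing $\delta$ units between $c_i$ and the ring center $f$, with outliers handled through the dummy facility $\dummy$), and then apply a concentration inequality that exploits the Bernoulli$(s/N)$ structure of the coordinates rather than a vanilla bounded-differences bound. The paper invokes the specific Lipschitz--Chernoff bound from Cohen-Addad and Li (stated there as Proposition~1), which is exactly the variance-aware inequality you describe and yields the exponent $\Omega(\epsilon^2 s)$.
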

\begin{proof}
	We will first show that $\func(\vectorx)$ is $\radr$-Lipschitz with respect to the $\ell_1$ distance in $\mathbb{R}^N_{+}$ and then apply standard martingles tools to prove that $\func(\vectorx)$ is concentrated around its mean. To prove $\func(\vectorx)$ is $\radr$-Lipschitz, fix a client $c \in \ring$, and consider two vectors $\vecv$, $\vecv' \in \mathbb{R}^N_{+}$ with $\vecv'= \vecv + \delta \cdot \mathbbm{1}_{c}$. Note that the flow instance $\vecv$ is same as flow instance of $\vecv'$ except the later has $\delta$ more demand at client $c$ and $\delta$ less demand at cluster center $f$. 
	
	We will first construct a feasible flow for $FI(\vecv')$ from optimal flow $\fl$ of $FI(\vecv)$ as follows: create a dummy facility $\dummy$ with $\outliers$ units of supply and connect it to all the demand points by introducing edges with cost $0$. For every demand point, the amount of demand that was outlier in $\fl$ of $FI(\vecv)$ is sent to $\dummy$ at $0$ cost. To construct feasible flow for $\vecv'$, add $\delta$ units of flow from $c$ to $f$. Make all the demand coming on to $\dummy$ facility outlier. It is easy to see that the resulting flow is a feasible flow for $\vecv'$ and the cost of solution increases by at most $\delta \radr$. Therefore, $\func(\vecv') \leq \func(\vecv) + \delta \radr$.
	
	We next construct a feasible flow for $FI(\vecv)$ from optimal flow $\fl$ of $FI(\vecv')$ in a similar way: create a dummy facility $\dummy$ with $\outliers$ units of supply and connect it to all the demand points by introducing edges with cost $0$. For every demand point, the amount of demand that was outlier in $\fl$ of $FI(\vecv')$ is sent to $\dummy$ at $0$ cost. To construct feasible flow for $\vecv$, add $\delta$ units of flow from $f$ to $c$. Make all the demand coming on to $\dummy$ facility outlier. Again it is easy to see that this is a feasible flow for $\vecv$ and the cost of solution increases by at most $\delta \radr$. Therefore, $\func(\vecv) \leq \func(\vecv') + \delta \radr$. 
	
	The proof now follows in a similar way as done in Cohen-Addad and Li~\cite{Cohen-AddadL19Capacitated}, that is, the desired bound is obtained by applying the Chernoff bound for Lipschitz functions (stated in the following \Cref{lem:lip-func}).
	
	\begin{proposition}[\cite{Cohen-AddadL19Capacitated}]
		\label{lem:lip-func}
		Let $x_1,\ldots,x_n$ be independent random variables taking value $b$ with probability $p$ and value $0$ with probability $1-p$, and let $\func : [0,1]^n \rightarrow \mathbb{R}$ be a $L$-Lipschitz function in $\ell_1$ norm. Define $X:=(x_1,\ldots,x_n)$ and $\mu := \ex[\func(X)]$. Then, for $0 \leq \epsilon \leq 1$, $Pr[|\func(X)-\ex[\func(X)|] \geq \epsilon p n b L] \leq 2e^{-\epsilon^2pn/3}.$ 
	\end{proposition}
	
	We apply \Cref{lem:lip-func} on function $\func$ with $X := \vectorx$, $p := s/N$, $n := N$, $b:= 1/p$, and $L := \radr$ to obtain the following: 
	
	\begin{equation}
		\begin{split}
			\Pr[|\func(\vectorx)-\ex[\func(\vectorx)]| \geq (\epsilon/2)N \radr] & = \Pr[|\func(\vectorx)-\ex[\func(\vectorx)|] \geq (\epsilon/2)pnbL]  \\ & \leq 2 \cdot \exp \lr{\frac{-(\epsilon/2)^2pn}{3}}  \\ & = 2 \cdot \exp \lr{\frac{-(\epsilon/2)^2(s/N)N}{3}} \\ & = \exp \lr{-\Theta\lr{\epsilon^2 s}} \\ & = \exp\lr{-\Theta\lr{\epsilon^2 \cdot \frac{\outliers+k \log n}{\epsilon^2}}} \\ & \leq n^{-(k+\consta)} 
		\end{split}
	\end{equation}
	where the last equality follows by definition of $s$ and $\consta$ is a constant in last inequality. This concludes proof of \Cref{lem:part1}.
\end{proof}

\begin{lemma}
\label{lem:part2}
$|\ex[\func(\vectorx)]-\func(\ex[\vectorx])| \leq \epsilon \constb N \radr$ where $\constb$ is a constant.
\end{lemma}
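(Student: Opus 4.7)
The plan is to sandwich $\ex[\func(\vectorx)]$ around $\func(\vecone)$ by constructing, in each direction, an explicit feasible flow for one instance from an optimal flow of the other, using the cluster center $f$ as a \emph{relay} for any mismatch that sampling introduces. I will focus on the upper direction $\ex[\func(\vectorx)] \le \func(\vecone) + O(\epsilon N \radr)$; the lower direction $\ex[\func(\vectorx)] \ge \func(\vecone) - O(\epsilon N \radr)$ is symmetric, and the handling of outliers via the dummy facility $\dummy$ from \Cref{lem:MCFO} is the main technical obstacle.

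\emph{Construction for the upper bound.} Fix an optimal flow $\fl^*$ for $FI(\vecone)$, and for each $c \in \ring$ let $F^*(c) \in \solution \cup \{\dummy\}$ denote $c$'s destination in $\fl^*$. For every $q \in \solution \cup \{\dummy\}$ set $A_q = \{c \in \ring : F^*(c) = q\}$, $a_q = |A_q|$, and $Y_q = \sum_{c \in A_q} \vectorx_c$, so that $\ex[Y_q] = a_q$ and $\mathrm{Var}(Y_q) \le a_q N/s$. I build a feasible flow $\fl'$ for $FI(\vectorx)$ that (i) reuses $\fl^*$ outside $\ring$, (ii) for each $c \in \ring$ ships $\vectorx_c$ units along the $\fl^*$-route $c \to F^*(c)$, capped so that at most $a_q$ units arrive at any facility $q \in \solution$ from $A_q$, and (iii) resolves the remaining excess $(Y_q - a_q)^+$ and deficit $(a_{q'} - Y_{q'})^+$ between facility buckets by rerouting through $f$. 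A unit of excess from $c_1 \in A_q$ is paired with a unit of deficit at $q'$ by using an un-sampled $c'' \in A_{q'}$ as a bridge, via the path $c_1 \to f \to c'' \to q'$ of cost $\le 2\radr + \dist(c'', q')$. The key observation is that $\dist(c'', q')$ is exactly what $\fl^*$ paid for $c''$'s (now un-shipped) unit, so the rerouted unit essentially substitutes for $c''$ and the net extra cost over $\fl^*$ is $\le 2\radr$ per rerouted unit. Any residual surplus/deficit (from the imbalance $X - N$) is routed through $f$ directly at cost $\le \radr$ per unit, absorbed by $f$'s own demand $N - X$; this is consistent because $\sum_q(Y_q - a_q) = X - N$.

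\emph{Expected rerouting cost, outliers, and the reverse direction.} The direct (scaled) portion of $\text{cost}(\fl')$, together with the unchanged part outside $\ring$, has expectation $\func(\vecone)$ since $\ex[\vectorx_c] = 1$. Thus
\begin{align*}
\ex[\func(\vectorx)] - \func(\vecone) \;\le\; 2\radr \cdot \ex\!\Bigl[\sum_{q \in \solution}(Y_q - a_q)^+\Bigr] \;\le\; \radr \cdot \sum_{q \in \solution} \sqrt{\mathrm{Var}(Y_q)} \;\le\; \radr \cdot N\sqrt{k/s},
\end{align*}
using $\ex[(Y - \ex Y)^+] = \tfrac{1}{2}\ex[|Y - \ex Y|] \le \tfrac{1}{2}\sqrt{\mathrm{Var}(Y)}$ (Jensen), $\mathrm{Var}(Y_q) \le a_q N/s$, and Cauchy-Schwarz over the $k$ facility buckets. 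For $s = \Theta((\outliers + k\log n)/\epsilon^3)$ this is $O(\epsilon N \radr)$, as required. The outlier bucket $A_{\dummy}$ is treated \emph{separately}, because $\dummy$'s edges are not metric and cannot be shortcut through $f$ by triangle inequality; generalizing Agrawal et al.~\cite{agrawal2023kMO}, the bound $a_{\dummy} \le \outliers$ yields $\ex[(Y_{\dummy} - a_{\dummy})^+] \le \tfrac{1}{2}\sqrt{\outliers N/s}$, so the additional rerouting cost from sending outlier excess into under-used facility buckets (via $f$) is $O(\radr\sqrt{\outliers N/s}) = O(\epsilon\sqrt{N}\radr) = O(\epsilon N \radr)$ for the chosen $s$. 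Combining the facility-side, outlier-side, and residual contributions gives the upper bound. The lower direction uses the mirror-image construction---scale each sampled $c$'s flow down to $1$ unit and insert $1$ unit per un-sampled $c$ via $f$, then rebalance at the facility buckets by the same rerouting---and inherits identical variance bounds, completing the proof.
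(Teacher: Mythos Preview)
Your high-level plan—sandwich $\ex[\func(\vectorx)]$ around $\func(\vecone)$ by rerouting sampling discrepancies through the ring center—is the right picture, but the execution has a genuine gap in the upper direction, and the lower direction is handled suboptimally compared to the paper.

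\textbf{The gap in the upper bound.} You use two incompatible baselines. You write that the ``direct (scaled) portion'' $\sum_{c\in\ring}\vectorx_c\,\dist(c,F^*(c))$ has expectation $\func(\vecone)$ (true), and then you claim the remaining rerouting cost is at most $2\radr$ per rerouted unit because the $\dist(c'',q')$ term ``substitutes for $c''$'s unit in $\fl^*$''. But $c''$'s cost $\dist(c'',q')$ is part of $\func(\vecone)$, \emph{not} of the direct portion: since $c''$ is un-sampled, $\vectorx_{c''}=0$ and it contributes nothing to the direct portion. So when you subtract the direct portion (whose expectation is $\func(\vecone)$), the bridge term $\dist(c'',q')$ does not cancel against anything; it sits on top. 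Concretely, the pointwise inequality you would need,
\[
\text{cost}(\fl') \;\le\; (\text{outside}) + \sum_{c\in\ring}\vectorx_c\,\dist(c,F^*(c)) + 2\radr\sum_q (Y_q-a_q)^+,
\]
fails because the $f\to q'$ leg contributes $\sum_{q'} (a_{q'}-Y_{q'})^+\,\dist(f,q')$, and $\dist(f,q')$ is not $O(\radr)$ in general. Your variance/Cauchy--Schwarz bound on $\ex[\sum_q(Y_q-a_q)^+]$ is fine; what is missing is a correct per-unit cost comparison. The paper closes this gap differently: it invokes the Cohen-Addad--Li lemma, which uses the ring property $|\dist(c,q)-\dist(c',q)|\le 2\radr$ together with Hoeffding-type concentration to show that, with probability $1-n^{-10}$, the entire constructed flow costs at most $\func(\vecone)+O(\epsilon N\radr)$ pointwise; it then combines this with the crude Lipschitz bound $\func(\vectorx)\le \func(\vecone)+nN\radr$ for the bad event to pass to expectation.

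\textbf{The lower direction.} The paper does not need a mirror-image construction at all: it shows $\func(\ex[\vectorx])\le \ex[\func(\vectorx)]$ \emph{exactly}, by taking the convex combination (with weights $p_i=\Pr[\vectorx=\vecv_i]$) of the optimal flows for each outcome and observing that this is a feasible flow for $FI(\ex[\vectorx])$ with cost $\sum_i p_i\func(\vecv_i)=\ex[\func(\vectorx)]$. This is a one-paragraph convexity argument and avoids any $\epsilon$ loss on that side. Your symmetric construction would work in principle but is unnecessary and inherits the same accounting issue as above.
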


\begin{proof}
We prove this in two parts $(i)$ $\func(\ex[\vectorx]) \leq \ex[\func(\vectorx)]$ (proven in \Cref{sec:pf-lem1}), and $(ii)$ $\ex[\func(\vectorx)] \leq \func(\ex[\vectorx]) + \epsilon \constb N  \radr$ (proven in \Cref{sec:pf-lem2}).
%The proof of \Cref{lem:1} and \Cref{lem:2} are given in \Cref{sec:pf-lem1} and \Cref{sec:pf-lem2} respectively.
\end{proof}

\Cref{lem:coreset} follows by adding \Cref{lem:part1} and \Cref{lem:part2} and modifying $\constb\gets\constb+1/2$.

\subsection{Proof of \Cref{lem:part2}, part \texorpdfstring{$(i)$ $\func(\ex[\vectorx]) \leq \ex[\func(\vectorx)]$}{(i)}}
\label{sec:pf-lem1}
    %\begin{lemma}
     %   \label{lem:1}
      %  $\func(\ex[\vectorx]) \leq \ex[\func(\vectorx)]$
    %\end{lemma}
To prove, $\func(\ex[\vectorx]) \leq \ex[\func(\vectorx)]$, we construct a feasible solution for $FI(\ex[\vectorx])$ of cost no more than $\ex[\func(\vectorx)]$. Since the min-cost flow can only be lower, we get the desired result.

Let the outcomes of vector $\vectorx$ be $\vecv_1, \vecv_2, \ldots$ with probability $p_1,p_2,\ldots$ respectively. We have, $\ex[\func(\vectorx)] = \sum_{ i } p_i \func(\vecv_i)$. Let $\fl_i$ be the flow obtained for $\flow(\vecv_i)$. Now, consider the flow $\fl$ obtained by summing up over $i$, $\fl_i$ scaled by $p_i$. Observe that the cost of $\fl$ is at most $\sum_{ i } p_i \func(\vecv_i)$ which is $=\ex[\func(\vectorx)]$. 

Next, we will show that $\fl$ is a feasible flow for $\flow(\sum_i p_i \vecv_i) = \flow(\ex(\vectorx))$.
For a client $c$, let $y_{c,i}$ be the demand in $\flow(\vecv_i)$ and let $o_{c,i}$ be the demand that is left as outlier, i.e., $(y_{c,i}-o_{c,i})$ demand is satisfied  in $\fl_i$.
%Let $y_{c,i}$ be the demand corresponding to client $c \in \clients$ in $\flow(\vecv_i)$ and $o_{c,i}$ be the demand that is outlier corresponding to client $c \in \clients$ in $\fl_i$. %To see that $\phi$ is a feasible flow for $\flow(\mathbbm{1})$, note that, for every $i$, in $\fl_i$ scaled by $p_i$, at least $p_i \cdot (y_{c,i}-o_{c,i})$ demand is satisfied for each $c \in \clients$.
%Thus, at least $(y_{c,i}-o_{c,i})$ demand is satisfied for each $c \in \clients$ in $\fl_i$ for all $i$. 
Therefore, in $\fl$, total $\sum_{c \in \clients}\sum_{i} p_{i} \cdot (y_{c,i}-o_{c,i})$ demand is satisfied. And, $\sum_{c \in \clients}\sum_{i} p_{i} \cdot (y_{c,i}-o_{c,i}) = \sum_{i} p_{i} \cdot \sum_{c \in \clients} (y_{c,i}-o_{c,i}) = \sum_{c \in \clients} (y_{c,i}-o_{c,i}) = |\clients| - \outliers$. The second last equality follows because the sum of probabilities is $1$ and last equality follows as $\phi_i$ is a feasible flow of $\flow(\vecv_i)$. % and each flow has $\outliers$ outliers.

%Note that, for every $i$, in $\fl_i$, at least $(y_{c,i}-o_{c,i})$ demand is satisfied for each $c \in \clients$.

Next we show that the capacities are respected on every facility in flow $\fl$. For any facility $j \in \solution$, let $j_i$ be the total flow coming onto $j$ in $\fl_i$. Therefore, total flow coming on to $j$ in $\fl$ is $\sum_{i}j_i p_i \leq (\sum_i p_i) \cdot \max_{i} j_i = \max_{i} j_i = j_{i^*} $ (say) which is at most the capacity of $j$ as $\fl^*$ is a feasible flow of $\flow(\vecv_i)$.

\subsection{Proof of \Cref{lem:part2}, part \texorpdfstring{$(ii)$ $\ex[\func(\vectorx)] \leq \func(\ex[\vectorx]) + \epsilon \constb N  \radr$}{(ii)}}
\label{sec:pf-lem2}
    %\begin{lemma}
     %   \label{lem:2}
      %  $\ex[\func(\vectorx)] \leq \func(\ex[\vectorx]) + \epsilon \constb N  \radr$
    %\end{lemma}
    
To prove this we first prove  \Cref{lem:gx1} followed by \Cref{lem:gx2}.

%To prove this we will first prove a claim as stated in \Cref{lem:gx1} followed by a lemma as stated in \Cref{lem:gx2}.

\begin{restatable}{claim}{loosebound} \label{lem:gx1}
With probability 1, $\func(\vectorx) \leq \func(\ex[\vectorx]) + n N \radr$.
\end{restatable}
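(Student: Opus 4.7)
The plan is to invoke the Lipschitz property of $\func$ that was already established inside the proof of Lemma~\ref{lem:part1}: for any two nonnegative vectors $\vecv, \vecv'\in\mathbb{R}_+^N$ that differ by $\delta$ in a single coordinate corresponding to some client $c \in \ring$, one has $|\func(\vecv) - \func(\vecv')| \leq \delta\,\radr$. By walking from $\ex[\vectorx] = \mathbbm{1}$ to $\vectorx$ one coordinate at a time and applying the triangle inequality, this single-coordinate estimate lifts to the global bound
\[
\func(\vectorx) - \func(\ex[\vectorx]) \;\leq\; \radr \cdot \|\vectorx - \mathbbm{1}\|_1,
\]
which holds deterministically, i.e.\ for every realization of $\vectorx$, so the conclusion will hold with probability $1$.

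It then suffices to upper bound $\|\vectorx - \mathbbm{1}\|_1$ in the worst case. Each coordinate of $\vectorx$ takes value either $0$ or $N/s$, so each summand $|\vectorx_c - 1|$ is at most $\max(1,\, N/s - 1) \leq N/s$, and summing over all $N$ coordinates yields $\|\vectorx - \mathbbm{1}\|_1 \leq N^2/s$. Since $s \geq 1$ and $N \leq n$, one has $N^2/s \leq nN$, and therefore $\func(\vectorx) \leq \func(\ex[\vectorx]) + nN\,\radr$, as claimed.

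There is essentially no obstacle here: the single-coordinate Lipschitz bound has already been paid for in the previous lemma, and the bound on $\|\vectorx - \mathbbm{1}\|_1$ is a one-line worst-case count over the two values each coordinate of $\vectorx$ can take. The role of this deliberately loose deterministic estimate is to serve as a ``tail'' bound; combined with the high-probability concentration of Lemma~\ref{lem:part1}, it will let one control the contribution of rare bad realizations of $\vectorx$ to $\ex[\func(\vectorx)]$ in the proof of Lemma~\ref{lem:part2}\,(ii), where the tighter conclusion $\ex[\func(\vectorx)] \leq \func(\ex[\vectorx]) + \epsilon\constb N\radr$ is ultimately needed.
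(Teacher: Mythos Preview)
Your proposal is correct and follows essentially the same approach as the paper: both invoke the $\radr$-Lipschitz property of $\func$ in the $\ell_1$ norm (established in the proof of Lemma~\ref{lem:part1}) and bound the $\ell_1$ distance between $\vectorx$ and $\mathbbm{1}=\ex[\vectorx]$ by $N^2/s\le N^2\le nN$ using that each coordinate of $\vectorx$ lies in $[0,N/s]$. The paper phrases this as both vectors lying in the cube $[0,N/s]^N$ (so $\func$ ranges over an interval of length at most $(N^2/s)\radr$), while you bound $\|\vectorx-\mathbbm{1}\|_1$ directly; the arithmetic and the conclusion are identical.
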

\begin{proof}
	The value of $\func(\vectorx)$ lies in an interval of length $N/s \cdot N \cdot \radr \leq N^2 \radr$ because $\vectorx \in [0, N/s]^N$ and function $\func$ is $\radr$-Lipschitz. $\func(\ex[\vectorx])$ also lies in the same interval because $\ex[\vectorx] = \mathbbm{1} \in [0, N/s]^N$. Therefore, the claim follows.
\end{proof}
Now we prove the following lemma, which is the key technical step in our analysis.
\begin{lemma}
\label{lem:gx2}
    With probability $1-n^{-10}$, $\func(\vectorx) \leq \func(\ex[\vectorx]) +  \epsilon \constb N \radr$ where $\constb$ is a constant.
\end{lemma}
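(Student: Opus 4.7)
The plan is to exhibit an explicit feasible flow $\fl'$ for $\flow(\vectorx)$ whose cost exceeds $\func(\ex[\vectorx])$ by at most $\epsilon\constb N\radr$ with probability at least $1-n^{-10}$; since $\func(\vectorx)$ is the \mcfo optimum, $\func(\vectorx) \le \mathrm{Cost}(\fl')$ and the lemma follows. I would begin by fixing an optimal fractional flow $\fl^*$ realizing $\func(\ex[\vectorx]) = \func(\mathbbm{1})$; for each $c \in \ring$ let $y^*_{c,f_i} \in [0,1]$ denote the amount $c$ sends to $f_i \in \solution$ in $\fl^*$ and $o^*_c$ the outlier mass of $c$, with $\sum_i y^*_{c,f_i} + o^*_c = 1$. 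Set $Y_i := \sum_{c \in \ring} y^*_{c,f_i}$ and $O^* := \sum_{c \in \ring} o^*_c$, so $\sum_i Y_i + O^* = N$. Then define an initial rescaled flow $\fl^{\rm init}$ by retaining the non-ring portion of $\fl^*$ verbatim and, for each sampled $c \in S$, scaling its $\fl^*$-assignments by $N/s$ (so $c$ sends $(N/s)y^*_{c,f_i}$ to $f_i$ and has $(N/s)o^*_c$ outlier mass). Conditioning on $|S|=s$, the total flow matches the demands of $\flow(\vectorx)$, but $\fl^{\rm init}$ may be infeasible: letting $\hat Y_i := \sum_c y^*_{c,f_i}\vectorx_c$ and $\hat O := \sum_c o^*_c\vectorx_c$, the signed deficits $\Delta_i := \hat Y_i - Y_i$ and $\Delta_o := \hat O - O^*$ satisfy $\sum_i \Delta_i + \Delta_o = 0$, and capacity at $f_i$ (resp.~the outlier budget) is exceeded when $\Delta_i > 0$ (resp.~$\Delta_o > 0$).

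To obtain a feasible $\fl'$, I would pair positive deficits with negative deficits and reroute the excess through the cluster center $f$. Each rerouted unit costs only $O(\radr)$ above $\fl^*$: whenever $Y_j > 0$ there exists a pivot $c'' \in \ring$ with $y^*_{c'',f_j} > 0$, so for any donor $c \in S$ the triangle inequality gives
\[
d(c, f_j) \;\le\; d(c, f) + d(f, c'') + d(c'', f_j) \;\le\; 2\radr + d(c'', f_j),
\]
and the residual $d(c'',f_j) - d(c,f_i)$ is absorbed by a coupled swap between the donor's $\fl^{\rm init}$-assignment and the pivot's $\fl^*$-assignment, mirroring Cohen-Addad and Li's \ckm analysis. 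Outlier-to-facility (and facility-to-outlier) rerouting uses the same mechanism, exploiting the free cost of outlier assignments (which ensures that conversions to outlier never increase cost). Thus $\mathrm{Cost}(\fl') \le \mathrm{Cost}(\fl^*) + O(\radr)\bigl(\sum_i|\Delta_i| + |\Delta_o|\bigr)$, and it remains to show $\sum_i|\Delta_i| + |\Delta_o| = O(\epsilon N)$ with probability at least $1 - n^{-10}$.

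Each $\hat Y_i$ is a sum of independent random variables $y^*_{c,f_i}\vectorx_c \in [0, N/s]$ with mean $Y_i$ and variance at most $Y_i\cdot N/s$; Bernstein's inequality with $t = \Theta(\log n)$ gives $|\hat Y_i - Y_i| = O\bigl(\sqrt{Y_i N\log n/s} + (N/s)\log n\bigr)$ with probability at least $1 - n^{-20}$, and the analogous bound holds for $|\hat O - O^*|$ using $O^* \le \outliers$. A union bound over the $k+1$ groups yields total failure probability $\le n^{-10}$. Summing via Cauchy--Schwarz ($\sum_i \sqrt{Y_i} \le \sqrt{kN}$) and plugging in $s = \Theta((\outliers + k\log n)/\epsilon^3)$ gives $\sum_i|\Delta_i| + |\Delta_o| = O(\epsilon N)$, so the excess cost is $O(\epsilon N\radr) \le \epsilon \constb N \radr$ for a suitable constant $\constb$. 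I expect the main obstacle to be the global swap accounting in the repair stage: the naive per-reroute bound $d(c,f_j) \le 2\radr + d(c'',f_j)$ leaves a residual $d(c'',f_j) - d(c,f_i)$ that is not, in isolation, $O(\radr)$. Its resolution requires consistently pairing donors with pivots so that positive residuals at donors cancel with savings at pivots in aggregate, extending the swap-style analysis of Cohen-Addad and Li to simultaneously handle the outlier group, which is treated as an additional ``zero-cost facility'' within the \mcfo framework.
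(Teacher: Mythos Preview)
Your plan is essentially the same construction the paper uses: start from an optimal flow $\fl^*$ for $\flow(\mathbbm 1)$, push the sampled clients' (scaled) mass along their $\fl^*$-routes, and repair the resulting per-facility surpluses and deficits by routing through the ring center $f$. Your ``zero-cost facility'' for outliers is exactly the paper's dummy facility $\dummy$. The paper then invokes the Cohen-Addad--Li bound (their Lemma~6) as a black box for the non-outlier part, whereas you spell out the concentration via Bernstein and aggregate with Cauchy--Schwarz; both give the same $O(\epsilon N\radr)$ excess.

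Two points worth noting. First, your intermediate inequality $\mathrm{Cost}(\fl') \le \mathrm{Cost}(\fl^*) + O(\radr)\bigl(\sum_i|\Delta_i|+|\Delta_o|\bigr)$ is stated too optimistically: the cost of your $\fl^{\rm init}$ already differs from $\mathrm{Cost}(\fl^*)$ by $\sum_{i,c} y^*_{c,f_i}(\vectorx_c-1)\,d(c,f_i)$, and that sum is not controlled by the count deficits $\Delta_i$ alone. The fix (which you allude to in your ``obstacle'' paragraph) is to decompose $d(c,f_i)=d(f,f_i)+[d(c,f_i)-d(f,f_i)]$; the $d(f,f_i)$-part is exactly absorbed by the reroute through $f$, and the residual is a sum of mean-zero terms each bounded by $\radr$, requiring its own concentration. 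This is precisely the content of Lemma~6 that the paper imports.

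Second, the paper's outlier accounting is simpler than yours and does not need concentration on $\hat O$: since the edges to $\dummy$ and from $f$ to $\dummy$ cost $0$, the only nonzero outlier contribution is the overflow mass routed from sampled outlier clients to $f$, and because $|\out\cap\ring|\le \outliers$ while $N>s=\Theta((\outliers+k\log n)/\epsilon^3)$, this overflow mass times $\radr$ is already $O(\epsilon N\radr)$ deterministically. Your Bernstein bound on $|\Delta_o|$ is correct but unnecessary here.
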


\begin{proof}
    To prove this, we construct a feasible flow $\fl$ for $FI(\vectorx)$ from the min cost flow with outliers of $FI(\mathbbm{1})$ such that cost of flow $\fl$ is bounded by $\func(\ex[\vectorx]) + \epsilon \constb N \radr$.
    
    Consider the optimal min-cost flow with outliers $\flowS$ of $\flow(\mathbbm{1})$. Let $\out$ be the set of clients that are made outliers in this flow. Note that, $|\out| \leq \outliers$. %Let $\outr \subseteq \out$ be clients that are outliers in $\flowS$  from $\ring$. 
    Create a dummy facility $\dummy$ and connect it to all the clients in $\out$ at $0$ cost. For every client $c \in \out$, add 1 unit of flow from $c$ to $\dummy$ in $\flowS$. Note that, $\func(\mathbbm{1})$ does not change. Also, set the cost from ring center $f'$ to $\dummy$ to $0$. 
    
    For all the clients that are not in $\ring$, we route their 1 unit of demand in the same way as in $\flowS$. We are left with demands in $\ring$ and the extra $N - \sum_{c \in \ring} \vecv_c$ demand at the ring center $f'$. 
    
    For every facility $f \in \facilities \cup \dummy$, let $\ring^{f} \subseteq \ring $ be the set of clients served by $f$ in $\flowS$ and let $\samplef \subseteq \ring^f$ be the sampled clients in $\ring^f$. If $\samplef$ is under-sampled, i.e., $\samplef \cdot \frac{N}{s} \leq \ring^{f}$, then in $\fl$ we route :
    \begin{enumerate}
        \item $\frac{N}{s}$ units of flow from every $c \in \samplef$ to facility $f$ and,
        \item send $|\ring^f| - |\samplef| \cdot \frac{N}{s}$ units of flow from ring center $f'$ to facility $f$.
    \end{enumerate}

    Whereas if $\samplef$ is over-sampled, i.e., $|\samplef| \cdot \frac{N}{s} > |\ring^{f}|$, then we first pick a sub sample randomly from $\samplef$, say $\samplef_s$ of size $\lfloor |\ring^f| \cdot \frac{s}{N} \rfloor$. Now in $\fl$ we route:
    \begin{enumerate}
        \item $\frac{N}{s}$ units of flow from every $c \in \samplef_s$ to facility $f$,
        \item send $|\ring^f| - |\samplef_s| \cdot \frac{N}{s}$ units of flow from ring center $f'$ to facility $f$ and,
        \item $\frac{N}{s}$ units of flow from every $c \in \samplef \setminus \samplef_s $ to ring center $f'$.
    \end{enumerate}

Observe that the total amount of incoming flow on $\dummy$ is $|\out| \leq \outliers$. Make all the demand coming on to $\dummy$ outlier in $\fl$. It can be easily verified that the resulting flow $\fl$ is a feasible flow for instance $\flow(\vectorx)$.  

The cost of flow of the clients in $\ring \setminus \out$ is same as in Cohen-Addad and Li~\cite{Cohen-AddadL19Capacitated} as stated in following lemma:

\begin{lemma}[\cite{Cohen-AddadL19Capacitated}]
\label{ineq_cost_bd}
    With probability at least $1-n^{-10}$, the cost of clients in $\ring \setminus \out$ in flow $\fl$ is bounded by  $\displaystyle \sum_{f \in \solution \setminus \dummy} \sum_{c \in \ring^f \setminus \out} \dist(c,f) + 0.48 \epsilon N \radr$.%$\sum_{f \in \facilities \setminus \dummy}(\sum_{c \in \ring^f \setminus \out} \dist(c,f)+ 0.31 \epsilon N |\ring^f| + \frac{0.17 \epsilon N \radr}{k}) = \sum_{f \in \facilities \setminus \dummy} \sum_{c \in \ring^f \setminus \out} \dist(c,f) + 0.31 \epsilon \radr |\ring| + 0.17 \epsilon N \radr$
\end{lemma}

Now, for clients in $\out$, the cost paid from client to $\dummy$ or from ring center to $\dummy$ is $0$. Therefore, the only additional cost we paid is from a client $c \in \out$ to the ring center $f'$ which is at most $\radr$. Since $|\out| \leq \outliers$, the total additional cost is at most $\outliers \cdot \radr \leq \frac{\epsilon}{\constb} N \cdot \radr$ where $\constb$ is a constant. Adding this to Lemma~\ref{ineq_cost_bd} we obtain a total cost of at most,

\begin{equation}
    \sum_{f \in \solution}\sum_{c \in \ring} \dist(c,f) + 0.48 \epsilon N \radr + \epsilon \constb N\radr. 
\end{equation}

Modifying constant $\constb= 0.48 + \constb$ gives us \Cref{lem:gx2}.
\end{proof}

From \Cref{lem:gx1} and \Cref{lem:gx2}, we have
\begin{equation}
    \begin{split}
      \ex[\func(\vectorx)] & \leq n^{-10} \cdot (\func(\ex[\vectorx]) +n N \radr) + (1-n^{-10}) (\func(\ex[\vectorx]) + \constb \epsilon N \radr)  \\ & = \func(\ex[\vectorx]) + (n^{-10}\cdot n + (1-n^{-10})\cdot \constb)N \radr \\ & \leq \func(\ex[\vectorx]) + \epsilon \constb N \radr,  
    \end{split}
\end{equation}
and this completes the proof that $\ex[\func(\vectorx)] \leq \func(\ex[\vectorx]) + \epsilon \constb N  \radr$.

%!TEX root = main.tex
\section{Analysis of Multiple Rings Case} \label{sec:multiring-full}
In \Cref{sec:singlering}, we have shown how to bound the error for a single ring case. We now consider the multiple rings where we use union bound over all the rings to obtain the desired claims. 

%A ring $\rings{f}{j}$ is called {\em small} if $|\rings{f}{j}| \leq s$ and {\em large} otherwise. Consider the rings in a particular order $\order$ where all the large rings come after all the small rings. Though rings can be considered in any arbitrary order but for simplicity we consider them according to $\order$. 
Consider the rings in any arbitrary order $\order$. For any two rings $\rings{f}{j}$ and $\rings{f'}{j'}$, we say $(f,j)<(f',j')$ if  $\rings{f'}{j'}$ comes after $\rings{f}{j}$ in $\order$. Fix a ring $\rings{f}{j}$. Now we define a function $\func_{f,j}$ for each ring similar to the function $\func$ defined in Section~\ref{sec:singlering}. As done in Section~\ref{sec:singlering}, to define $\func_{f,j}$, we first create an instance $\flow(\vectory)$ of \mcfo corresponding to a vector $\vectory$ of size $n$. To create this instance, we define a random vector $\vectorx \in \mathbb{R}_{+}^{|\rings{f}{j}|}$ where each coordinate pick value $\frac{|\rings{f}{j}|}{s}$ with probability $\frac{s}{|\rings{f}{j}|}$ and 0 otherwise. In $\flow(\vectory)$, every client $c \in \rings{f}{j}$ has $\vectorx_{c}$ demand and the cluster center $f$ has $|\rings{f}{j}|- \sum_{c \in \rings{f}{j}} \vectorx_{c}$ demand. But, as we consider the sample from the other rings too, $\func_{f,j}$ will also depend on these samples. Suppose we fix samples $\samples{f'}{j'}$ for every ring $\rings{f'}{j'}$, $(f',j') \neq (f,j)$. In $\flow(\vectory)$, set demand $\frac{|\rings{f'}{j'}|}{s}$ at each client $c \in \samples{f'}{j'}$. $\func_{f,j}(\vectory)$ is now the optimal cost of $\flow(\vectory)$.

Let $\ex_{\samples{f'}{j'} : (f',j') > (f,j)}[\func_{f,j}(\vectory) | \samples{f'}{j'}:(f',j') < (f,j)]$ or in short $\ex_{>(f,j)}[\func_{f,j}(\vectory)]$ be the expectation of $\func_{f,j}$ over all samples $\samples{f'}{j'}$ for $(f',j')>(f,j)$ given fixed samples $\samples{f'}{j'}$ for all $(f',j')<(f,j)$. Similarly, define $\ex_{\samples{f'}{j'} : (f',j') \geq (f,j)}[\func_{f,j}(\vectory) | \samples{f'}{j'}:(f',j') < (f,j)]$ or in short $\ex_{\geq(f,j)}[\func_{f,j}(\vectory)]$ be the expectation of $\func_{f,j}$ over all samples $\samples{f'}{j'}$ for $(f',j')\geq(f,j)$ given fixed samples $\samples{f'}{j'}$ for all $(f',j')<(f,j)$. 

Let $(f_1, j_1)$ and $(f_l, j_l)$ be the indexes of first and last rings in this order, respectively. Then, $\ex_{\ge(f_1, j_1)}[\func_{f, j}(\vectory)] = \cost_{\outliers}(\clients, \solution)$, and $\ex_{>(f_l, j_l)}[\func_{f, j}(\vectory)] = \wcost_{\outliers}(\wtset, \solution)$.
%Furthermore, consider an intermediate ring with index $(f_1, j_1)$. Then, in the definition of $\ex_{>(f_1, j_1)}[\func_{f, j}(\vectory)]$, we have fixed the samples for all rings that come strictly prior to $(f_1, j_1)$; and thus is analogous to 

With these definitions, we get the following lemma analogous to \Cref{lem:part1} and \Cref{lem:part2} (combined) in single ring case. 

\begin{lemma}
    With probability at least $1-n^{-(k+\consta)}$, for any ring $\rings{f}{j}$, $|\ex_{>(f,j)}[\func_{f,j}(\vectory)] - \ex_{\geq(f,j)}[\func_{f,j}(\vectory)]| \leq \epsilon \constb |\rings{f}{j}| \radr$ where $\radr$ is the radius of ring $\rings{f}{j}$ and $\consta,\constb$ are constants.
\end{lemma}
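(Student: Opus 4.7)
The plan is to reduce the claim to a concentration-of-measure statement by conditioning on all randomness outside the target ring. Fix the samples $\samples{f'}{j'}$ for all $(f', j') < (f, j)$ arbitrarily, and treat the sample $\vectorx$ at ring $(f,j)$ as the sole source of randomness. Define the conditional expectation $h(\vectorx) := \ex_{>(f,j)}[\func_{f,j}(\vectory) \mid \vectorx]$. By the tower property, $\ex_{\ge(f,j)}[\func_{f,j}(\vectory)] = \ex_{\vectorx}[h(\vectorx)]$, so the claimed inequality is equivalent to the concentration bound $|h(\vectorx) - \ex_{\vectorx}[h(\vectorx)]| \le \epsilon \constb |\rings{f}{j}| \radr$, which I will establish with the stated probability over $\vectorx$.

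The key technical step is to establish that $\func_{f,j}(\vectory)$ is $\radr$-Lipschitz in $\vectorx$ under the $\ell_1$ norm, for every fixed realization of the other samples. This follows by the rerouting argument used in the proof of \Cref{lem:part1}: increasing $\vectorx_c$ for a single $c \in \rings{f}{j}$ by $\delta$ changes the demand at $c$ by $+\delta$ and the compensating demand at the ring center $f$ by $-\delta$ in $\flow(\vectory)$, and routing $\delta$ units of flow between $c$ and $f$ (together with the same dummy-facility trick used there to handle outliers) produces a feasible \mcfo solution whose cost changes by at most $\delta \cdot \dist(c,f) \le \delta \radr$. Since this bound is pointwise in the later samples, the conditional expectation $h$ inherits the same $\radr$-Lipschitz constant. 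Applying \Cref{lem:lip-func} to $h$ with $p = s/|\rings{f}{j}|$, $n = |\rings{f}{j}|$, $b = 1/p$, $L = \radr$, and slack parameter $\epsilon/2$ then yields $|h(\vectorx) - \ex_{\vectorx}[h(\vectorx)]| \le (\epsilon/2)|\rings{f}{j}|\radr$ with probability at least $1 - 2\exp(-\Theta(\epsilon^2 s))$. By the choice $s = \Theta((\outliers + k\ln n)/\epsilon^3)$ with a sufficiently large leading constant $a$, this probability is at least $1 - n^{-(k+\consta+2)}$, so a union bound over the $O((k+m)\log n)$ rings produced by the algorithm still yields the claim simultaneously for every ring with probability at least $1 - n^{-(k+\consta)}$. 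Any residual bias from the inner expectation over later samples is absorbed into $\constb$ by a ring-local analogue of \Cref{lem:part2}, whose feasible-flow construction in \Cref{sec:pf-lem1} and \Cref{sec:pf-lem2} only manipulates demands inside $\rings{f}{j}$ together with a bounded amount of outlier flow at the dummy facility $\dummy$, and is therefore oblivious to the external demands.

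The main obstacle is precisely this transfer: the Lipschitz rerouting through $f$, the capacity/outlier rerouting through $\dummy$, and the $\ell_1$ Chernoff bound must all remain valid when the demands outside $\rings{f}{j}$ are no longer uniformly $1$ but are determined by fixed earlier samples and random later samples. This is exactly what the definition of $\func_{f,j}$ is designed to allow: once the external samples are fixed, $\flow(\vectory)$ is a deterministic \mcfo instance whose only dependence on $\vectorx$ is local to $\rings{f}{j}$, and the single-ring arguments of \Cref{sec:singlering} depend only on this local structure plus the global ability to absorb $\outliers$ units of outlier flow at $\dummy$. With these observations in hand, the single-ring analysis transfers ring by ring, completing the proof.
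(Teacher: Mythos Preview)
Your core argument is correct and matches the paper's intended route: condition on the samples at earlier rings, observe that $\func_{f,j}(\vectory)$ is $\radr$-Lipschitz in the coordinates of $\vectorx$ via the same reroute-through-$f$ argument as in \Cref{lem:part1}, pass the Lipschitz bound through the expectation over the later samples to conclude that $h(\vectorx)=\ex_{>(f,j)}[\func_{f,j}(\vectory)\mid\vectorx]$ is $\radr$-Lipschitz, and apply \Cref{lem:lip-func}. The paper gives no more detail than this; it simply states that the lemma is ``analogous to \Cref{lem:part1} and \Cref{lem:part2} (combined) in single ring case.''

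One minor confusion in your write-up: for the lemma \emph{as stated} there is no ``residual bias'' to absorb. By the tower property you already have $\ex_{\ge(f,j)}[\func_{f,j}(\vectory)]=\ex_{\vectorx}[h(\vectorx)]$ exactly, so the claim is a pure concentration statement about $h$, and your Lipschitz--Chernoff step already yields $(\epsilon/2)\,|\rings{f}{j}|\,\radr\le \epsilon\constb\,|\rings{f}{j}|\,\radr$ without any appeal to a \Cref{lem:part2} analogue. The paper's pointer to \Cref{lem:part2} is really aimed at the surrounding argument---identifying the fully-averaged quantity $\ex_{\ge(f_1,j_1)}[\cdot]$ with $\cost_\outliers(\clients,\solution)$---rather than at this concentration step. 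You can drop your final paragraph without loss.
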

\iffalse
\todo[inline]{I wonder if we should add the added comments from fair clustering paper, which were helpful for understanding. For example,
 for any ring $f, j$, the expected value is equal to the one vector, which implies that $E_{\ge (i1, j1)}[] $ is equal to CapKMed and $E_{> (i_{last}, j_{last})}[]$ is equal to $\wcost$. Also, the expected value being equal to one vector, implies that when we are at some intermediate ring $f, j$, the situation is almost similar to single ring case, so we can use same rerouting strategy as in Lemma 20 etc.}\fi
Combining over all rings and using respective definitions of $\ex_{>(f,j)}[\wcost_\outliers(\wtset,F)]$ and $\ex_{\geq(f,j)}[\wcost_\outliers(\wtset,F)]$, we get the following lemma:

\begin{lemma}
\label{lem:combined}
    For any feasible $F \subseteq \facilities$, $|\ex_{>(f,j)}[\wcost_\outliers(\wtset,F)] - \ex_{\geq(f,j)}[\wcost_\outliers(\wtset,F)]| \leq \epsilon \constb |\rings{f}{j}| \radr$ with probability $1-n^{-(k+\consta)}$.
\end{lemma}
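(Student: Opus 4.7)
The plan is to derive \Cref{lem:combined} directly from the preceding per-ring lemma via two small steps: first, rewrite the conditional expectations of $\func_{f,j}(\vectory)$ as the corresponding conditional expectations of $\wcost_\outliers(\wtset,F)$; second, apply a union bound over the rings so that the bound holds for every $(f,j)$ simultaneously.

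For the first step, the key observation is that the \mcfo instance $\flow(\vectory)$ was constructed precisely so that its optimum cost coincides with $\wcost_\outliers(\wtset,F)$ for the associated realization of $\wtset$. Every sampled client $c$ carries its sampled weight, the ring center $f$ absorbs the slack $|\rings{f}{j}|-\sum_{c}\vectorx_c$ exactly as the weighted capacitated assignment would, and the $\outliers$ outlier units correspond to unfulfilled demand. Therefore, for any fixing of samples in rings $(f',j')<(f,j)$, both $\ex_{\ge (f,j)}$ and $\ex_{>(f,j)}$ applied to $\func_{f,j}(\vectory)$ are literally the same quantities as applied to $\wcost_\outliers(\wtset,F)$. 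Substituting into the preceding lemma yields the per-ring bound with $\wcost_\outliers(\wtset,F)$ in place of $\func_{f,j}(\vectory)$.

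For the second step, the total number of rings is $|\solution_{\factortwo}|\cdot(\psi+1) = O((k+\outliers)\log n)$. Each ring fails to satisfy the bound with probability at most $n^{-(k+\consta)}$ by the preceding lemma, and since the constant $a$ in the definition of $s$ can be inflated to make $\consta$ as large as required, a union bound over all rings still yields the stated failure probability $n^{-(k+\consta)}$.

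The main conceptual obstacle is not in this step but rather in justifying the preceding per-ring lemma (which is assumed here); namely, one must verify that after conditioning on $\samples{f'}{j'}$ for $(f',j')<(f,j)$ and marginalizing over the samples from later rings, the function $\func_{f,j}(\cdot)$ continues to be $\radr$-Lipschitz in the entries corresponding to $\rings{f}{j}$. This is needed both for the Chernoff-type bound for Lipschitz functions used in the analogue of \Cref{lem:part1} and for the reroute-through-ring-center-plus-dummy-outlier argument used in the analogue of \Cref{lem:part2}. The independence of the samples across rings, together with the fact that demands contributed by other rings enter $\func_{f,j}$ only as additive offsets that do not affect the Lipschitz constant in the $\rings{f}{j}$ coordinates, makes this transfer essentially automatic; once it is in place, \Cref{lem:combined} follows by the identities and union bound described above.
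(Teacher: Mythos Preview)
Your proposal is correct and follows essentially the same approach as the paper: the paper derives \Cref{lem:combined} from the preceding per-ring lemma simply by invoking the definitions of $\ex_{>(f,j)}[\wcost_\outliers(\wtset,F)]$ and $\ex_{\geq(f,j)}[\wcost_\outliers(\wtset,F)]$, which is exactly your first step. One minor note: the union bound over rings that you include as your second step is actually performed in the paper \emph{after} \Cref{lem:combined} (in the telescoping argument that follows), not as part of its proof, so for this lemma the identification via definitions alone suffices.
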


We take union bound over all possible set of feasible solutions which gives that inequality in \Cref{lem:combined} fails with probability $\leq n^{-c}$ and hence the lemma holds with high probability. Now consider the process of going through all the rings $\rings{f}{j}$ according to $\sigma$. %Recall that in each small ring, we take the entire set of points into our sample, with each point having weight $1$ -- thus, the two terms the above lemma are equal, and thus their difference is equal $0$. In other words, small rings do not incur any error. Therefore, we only have to worry about the large rings. 
Applying Lemma~\ref{lem:combined} on all $O((k+m) \log n) \le n^2$ rings conditioned on the choices of $\samples{f'}{j'}$ for $(f',j') < (f,j)$. We get the following with high probability,

\begin{equation}
\begin{split}
    |\wcost_\outliers(\wtset,F) - \ex[\wcost_\outliers(\wtset,F)]| & \leq \sum_{(f,j)} \epsilon \constb  |\rings{f}{j}| \radr \\ &  = 2 \epsilon \constb \cdot \sum_{(f,j)} |\rings{f}{j}| \cdot \radr / 2 \\ & \leq 2 \epsilon \constb \cdot \cost_0(\clients,\solution_{\factortwo}) \\ & \leq 2 \epsilon \constb \cdot \factortwo \cdot \opt(I) \\ & \leq O(\epsilon) \cost_m(C, F),
\end{split}   
\end{equation}

where the second last inequality follows from \Cref{eqn:alpha-opt-bd}. Note that, $\ex[\wcost_\outliers(\wtset,F)] = \cost_{\outliers}(\clients,F)$. Therefore, scaling down $\epsilon$ by a constant factor to get $\epsilon \cdot \cost_m(C, F)$ from $O(\epsilon) \cost_m(C, F)$ gives us Lemma~\ref{lem:coreset-multiring}.
\iffalse
\subsection{Multiple Ring Case}
\label{subsec:multiring}
The argument from \Cref{sec:singlering} can be generalized to handle the more general situation, where we perform sampling in multiple rings. At a high level, the strategy is similar to \cite{Cohen-AddadL19Capacitated}, in that we process the rings in an arbitrary order, and while analyzing the error incurred by sampling in a particular ring, we condition on a fixed outcome of sampling in the rings that occur prior to the current ring; and this situation is, in spirit, similar to the single ring case as in \Cref{sec:singlering}. We show that this error is small when conditioned on the prior outcomes, with high probability. Finally, by taking a union bound over all rings, the overall error is shown to be small with high probability. We formalize this notion of conditional expectations in \Cref{sec:multiring-full}, which completes a formal proof of \Cref{lem:coreset-multiring}.
\fi

\section{Extensions to Other Objectives and \texorpdfstring{$(\alpha, \beta)$-}{}Fairness} \label{sec:extensions-short}
\subsection{Capacitated \texorpdfstring{$k$-Means/$k$-Facility Location}{k-Means/k-Facility Location} with Outliers} \label{subsec:kmean-fl}
First we consider \textsc{Capacitated $k$-Facility Location with Outliers} (\textsf{C$k$FLO}), which is a generalization of \ckmo with facility opening costs. Here, each $f \in \facilities$ has an \emph{opening cost} $o_f \ge 0$ (the opening costs may be different). \textsf{C$k$FLO} is defined exactly as \ckmo, except that the objective cost now also involves the opening costs. More formally, we want to find sets $\clients' \subseteq \clients$ and $\solution \subseteq \facilities$ that satisfies the other conditions, and minimizes $ \sum_{f \in \solution} o_f + \sum_{c \in \clients \setminus \clients'} \dist(c, \sigma(s)).$
We first observe that the $3+\epsilon$-approximation from \cite{Cohen-AddadL19Capacitated} can be easily adapted for \textsf{C$k$FLO} -- in the enumeration part, one also has to guess the opening cost of the closest facility to each ``leader'' up to a factor of $1+\epsilon$. Then, our reduction from \ckmo to \ckm also works in presence of opening costs -- indeed, the weighted sample $W$ approximately preserves distances w.r.t. all sets $\solution \subseteq \facilities$ w.h.p., and the opening costs of the facilities are unaffected by the sampling process. Thus, \Cref{thm:kmed-approx-preserving}, which gives a reduction from \ckmo to \ckm, in fact generalizes to give an FPT reduction from \textsf{C$k$FLO} to \textsf{C$k$FL}. In particular, we obtain $(3+\epsilon)$-approximation for \textsf{C$k$FLO} in general metrics.

\textsc{Capacitated $(k, z)$-Clustering with Outliers} is defined similar to \ckmo, except that the objective function now involves $z$-th power of distances, where $z \ge 1$ is fixed. In particular, $z = 1$ corresponds to \ckmo and $z = 2$ corresponds to \textsc{Capacitated $k$-Means with Outliers}. The ring sampling approach is known to generalize to handle the $z$-th powers of distances in the objective, however the analysis becomes more tedious, since the $z$-th power of distances only satisfies a ``weaker version'' of triangle inequality. However, we have to be careful in a few steps. For example, the guarantee in \Cref{lem:gx2} while rerouting flow of clients connected to different facilities, as well as that of outliers, we have to incur an additional $O_z(\epsilon) \cdot \opt$ term for each ring, where the constant in the big-Oh notation hides factors exponential in $z$ (which is acceptable since $z$ is fixed). When we add the errors over all rings, we now have an additional $O_z((k+m)\log n\cdot \epsilon) \cdot \opt$ term. Thus, by appropriately rescaling $\epsilon$, we obtain the desired guarantee. We leave the technical details for a future version.

\subsection{\texorpdfstring{($\fairone,\fairtwo$)-}{}Fair Clustering with Outliers}
Here, we give a brief overview of our reduction from $(\fairone, \fairtwo)$-\textsc{Fair $k$-Median with Outliers} (\textsf{$(\fairone, \fairtwo)$-BF$k$MO}) to its outlier-free counterpart. A formal definition of  \textsf{$(\fairone, \fairtwo)$-F$k$MO} follows.

\iffalse
\begin{definition}[\textsf{$(\fairone, \fairtwo)$-F$k$MO}]
Along with the finite set of points, $\clients$ and $\facilities$ and an integer $k$, in fair clustering ($k$-Median or $k$-Means), we are also given $\groups$ groups of $\clients$, say $\clients_1, \ldots, \clients_{\groups}$, two fairness vector $\fairone,\fairtwo\in [0,1]^{\groups}$ and an outlier vector $\outliers \in \nat^{\groups}$. The objective is to find a subset $\solution \subseteq \facilities$ of size at most $k$, a set $\clients'$ of outliers and an assignment $\assign : \clients \setminus \clients' \rightarrow \solution$ such that:
\begin{enumerate}
    \item $\forall g \in [\groups], \forall f \in \solution$, $|c \in \clients_g : \assign(c)=f| \leq \fairone_g \cdot |c \in \clients: \assign(c)=f|$,
     \item $\forall g \in [\groups], \forall f \in \solution$, $|c \in \clients_g : \assign(c)=f| \geq \fairtwo_g \cdot |c \in \clients: \assign(c)=f|$,
     \item $\forall g \in [\groups], |c: c \in \clients' \cap \clients_g| \leq m_g$, and
     \item  $\cost(\assign)$ is minimized where $\cost(\assign)=\sum_{c \in \clients \setminus \clients'} \dist(c,\assign(c))$ for ($\fairone,\fairtwo$)-Fair $k$-Median with Outliers and $\cost(\assign)=\sum_{c \in \clients \setminus \clients'} \dist(c,\assign(c))^2$) for ($\fairone,\fairtwo$)-Fair $k$-Means with Outliers.
\end{enumerate}
\end{definition}
\fi

\begin{definition}[\textsf{$(\fairone, \fairtwo)$-F$k$MO}] \label{def:fairkmedout}
Along with the finite set of points, $\clients$ and $\facilities$ and an integer $k$, we are given a partition of $\clients$ into $\groups$ groups, say $\LR{\clients_1, \ldots, \clients_{\groups}}$, two fairness vectors $\fairone,\fairtwo\in [0,1]^{\groups}$ and an outlier vector $\mathbbm{\outliers} \in \nat^{\groups}$. The objective is to find a subset $\solution \subseteq \facilities$ of size at most $k$, a set $\clients'$ of outliers and an assignment $\assign : \clients \setminus \clients' \rightarrow \solution$ such that:
\begin{enumerate}
    \item The number of clients of each color $g \in [\groups]$, in each cluster should be at most $\alpha_g$ and at least $\beta_g$ fraction of the total number of clients in the cluster. More formally, for each $g \in [\groups]$, and for each $f \in \solution$, the following holds: $\beta_g \le \frac{|\sigma^{-1}(f) \cap C_g|}{|\sigma^{-1}(f)|} \le \alpha_g.$
    \item For each color $g \in [\groups]$, the number of outliers is at most $m_g$, i.e., $\forall g \in [\groups], |c: c \in \clients' \cap \clients_g| \leq m_g$, and
     \item The following objective is minimized: $\cost(\assign)=\sum_{c \in \clients \setminus \clients'} \dist(c,\assign(c))$.
\end{enumerate}
\end{definition}
We give a brief sketch of the modifications required to adapt our approach for \ckmo in the context of \textsf{$(\fairone, \fairtwo)$-F$k$MO}. Let $\cI = ((\points, \dist), \clients, \facilities, k, \mathbbm{\outliers}, \alpha, \beta)$ denote the given instance of \textsf{$(\fairone, \fairtwo)$-F$k$MO}. Let $m \coloneqq \sum_{g \in [\groups]} m_g$ denote the total number of outliers across the colors. First, we define a suitable instance $I_0$ of $(k+m)$-\textsc{Median} by ignoring fairness constraints and by adding a co-located facility at each client. It can be easily that $\opt(I_0) \le \opt(\cI)$. Then, we obtain a constant-approximate solution for $I_0$, which is then used to perform ring sampling. The rings are defined as before, however, from each ring $\ringf{j}$ and for each color $g \in [\groups]$, we need to sample $O((m + k\log n)/\epsilon^3)$ points. Note that each weighted client in the sample retains its original color. It follows that the overall size of the sample $\wtset$ is bounded by $\lr{\frac{km\groups \log n}{\epsilon}}^{O(1)}$. In the enumeration phase, we try to guess the exact distribution of weighted outliers, and for each such guess we obtain an instance of \textsf{$(\fairone, \fairtwo)$-F$k$M} by deleting the corresponding set of guessed outliers. We use a $\gamma$-approximation algorithm on each such instance, and it can be argued that the best solution is a $(\gamma+\epsilon)$-approximation for $\cI$.

Now we sketch the analysis. First, given a feasible set $\solution\subseteq \facilities$, we need to compute a feasible assignment that respects the $(\alpha, \beta)$-fairness constraints. To this end, we need an analog of \mcfo, which we call \emph{Weighted Fair Assignment with Outliers} (\wfao) in the following subsection. As observed in \cite{BandyapadhyayFS21}, this problem is NP-hard, even without outliers; but they design an FPT algorithm for the problem. We generalize their ideas to handle outliers. \wfao is used in the algorithm to find a feasible assignment, as well as in the analysis to argue that the cost of various reassignments remains bounded. One key idea in our analysis, again, is to show that the cost of over/under-sampling from the set of outliers in each ring is negligible, since the size of the sample is large compared to $m$, the total number of outliers. In \Cref{subsec:fair-wfao} give a formal definition of \wfao and design an FPT algorithm for the problem. The rest of the analysis is similar to that for \ckmo, and is deferred to a later full version of the paper. We also note that as in the previous subsection, the arguments for \textsf{$(\fairone, \fairtwo)$-F$k$MO} readily generalize to \kmeans and \textsc{$k$-Facility Location} objectives. Furthermore, the whole approach also generalizes to the case when the groups of clients are not necessarily disjoint. We omit these details.

\subsection{Weighted Fair Assignment with Outliers} \label{subsec:fair-wfao}
For a fixed set of $k$ facilities $\solution=\{ f_1,f_2,\ldots,f_k\}$, we define $\wcost_{\outliers}(\wtset,M,\solution)$ as the cost of optimal assignment satisfying property 1-3 above over all $M \in \collection$. If there is no such assignment we call such $\solution$ and $M$ to be infeasible and define $\wcost_{\outliers}(\wtset,M,\solution) = \infty$. Note that, for given $\solution$ and $M$, $\wcost_{\outliers}(\wtset,M,\solution)$ can be obtained by solving a \mcfo. But, we can not try out all possible $M$ as the size of $\collection$ can be very large. To handle this, for feasible fixed $\solution$, we modify the weighted fair assignment algorithm presented in ~\cite{BandyapadhyayFS21} for finding the optimal assignment of $\wtset$ to $\solution$  satisfying the fairness and outlier constraints. Let us call this problem as {\em Weighted Fair Assignment with Outliers} problem (\wfao). Assignment problem for ($\fairone,\fairtwo$)-Fair Clustering and hence for ($\fairone,\fairtwo$)-Fair Clustering with Outliers is NP-Hard~\cite{Bera2019}. Therefore, we give a FPT algorithm with running time FPT in $k$ and $\groups$.

\begin{lemma}
	There exists an algorithm that returns integral solution for \wfao. The running time of algorithm is $(k\groups)^{O(k \groups)}n ^{O(1)} L$ where $L$ is the total number of bits in the encoding of distances and weights in the problem instance $\cI$.
\end{lemma}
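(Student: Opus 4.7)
The plan is to extend the weighted fair assignment algorithm of Bandyapadhyay et al.~\cite{BandyapadhyayFS21} (which handles the same problem without outliers) by introducing integer variables that capture per-color outlier counts. The key observation is that once we commit to the ``color profile'' of the solution --- that is, how many weighted units of each color are assigned to each facility, and how many weighted units of each color are declared outliers --- the residual problem is a standard transportation LP that can be solved optimally and integrally via min-cost flow. Since such a profile is described by only $O(k\groups)$ integer quantities, we can appeal to a Lenstra-style algorithm for Mixed Integer Linear Programming (MILP) in fixed dimension.

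Concretely, I would formulate a MILP with integer variables $n_{i,g} \in \mathbb{N}$ for each $(i,g) \in [k] \times [\groups]$ (the total weight of color-$g$ clients assigned to $f_i$) and $o_g \in \mathbb{N}$ for each color $g$ (the total weight of color-$g$ outliers), together with continuous variables $x_{c,i} \ge 0$ for the fractional weight of client $c$ sent to facility $f_i$. The constraints are: (i) per-client budgets $\sum_i x_{c,i} \leq \wt(c)$ for each $c$; (ii) the link $\sum_{c \in \clients_g} x_{c,i} = n_{i,g}$ for each $(i,g)$; (iii) per-color outlier budget $o_g \leq m_g$ along with the global balance $\sum_i n_{i,g} + o_g = \sum_{c \in \clients_g} \wt(c)$ for each $g$; (iv) the $(\fairone,\fairtwo)$-fairness bounds $\fairtwo_g \sum_{g'} n_{i,g'} \leq n_{i,g} \leq \fairone_g \sum_{g'} n_{i,g'}$ for each $(i,g)$; and (v) any per-facility capacity constraint. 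The objective is $\min \sum_{c,i} \dist(c,i)\cdot x_{c,i}$.

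Finally, I would invoke the well-known theorem of Lenstra, as extended to the mixed-integer setting, that a MILP with $d$ integer variables and input bit-length $L$ can be solved in time $d^{O(d)} \cdot \poly(L)$, where the polynomial factor absorbs both the $O(nk)$ continuous variables and the $O(k\groups)$ integer variables; setting $d = O(k\groups)$ yields the claimed $(k\groups)^{O(k\groups)} \cdot n^{O(1)} \cdot L$ running time. Integrality of the returned $x_{c,i}$ follows because, conditioned on any feasible integer assignment $(n_{i,g}, o_g)$, the residual LP over $x_{c,i}$ is a transportation problem with integer right-hand sides, whose constraint matrix is totally unimodular, and hence has an integer optimum. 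The main obstacle is to rigorously justify this decoupling: all the combinatorially ``hard'' constraints --- fairness and per-color outlier budgets --- must live purely on the $O(k\groups)$ integer variables, while the high-dimensional work sits in an integral transportation LP coupled only through the linking equalities. Once this separation is verified, applying the Lenstra-type MILP solver directly yields an integral optimum within the claimed time.
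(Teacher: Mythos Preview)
Your proposal is correct and follows essentially the same approach as the paper: formulate \wfao as a MILP whose only integer variables are the per-facility, per-color totals (the paper calls these $y_{fg}$, you call them $n_{i,g}$), invoke the Lenstra--Kannan--Frank--Tardos algorithm to solve it in time $(k\groups)^{O(k\groups)} \cdot \poly(n,L)$, and then recover integral $x$-values by observing that the residual problem with the integer profile fixed is a transportation LP. The only cosmetic differences are that you introduce explicit integer outlier-count variables $o_g$ (the paper instead keeps continuous per-client slacks $z_c$ and bounds their per-color sums), and you appeal to total unimodularity in one shot whereas the paper explicitly solves a separate min-cost flow per color; neither changes the argument or the running time.
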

\begin{proof}
	We formulate \wfao as a Mixed-Integer Linear Program. For every client $c\in \clients$ and every facility $f \in \solution$, let $x_{cf}$ be the variable denoting how much weight of client $c$ is assigned to facility $f$. For every facility $f \in \solution$ and every group $g \in \{1,\ldots, \groups \}$, let $y_{fg}$ be the variable denoting how much weight from clients of group $g$ are assigned to facility $f$. For every client $c \in \clients$, let $z_{c}$ denote how much weight of client $c$ is outlier. \wfao can now be formulated as Mixed-Integer Linear program as follows:
	
	$$\text{Minimize}~\cost(x,y,z)$$
	\begin{eqnarray}
		\text{subject to},~
		& \sum_{f \in \solution} x_{cf} + z_{c}  = \wt(c) & \forall c \in \clients \label{LP_const1}\\ 
		& \sum_{c: c \in \clients_g} z_{c} \leq \outliers_g & \forall g \in \{ 1,\ldots, \groups \}  \label{LP_const2}\\
		& \sum_{c: c \in \clients_g} x_{cf} = y_{fg} & \forall f \in \solution, \forall g \in \{ 1,\ldots, \groups \}\label{LP_const3}\\ 
		& \sum_{c:c \in \clients_g} x_{cf} \leq \fairone_g \sum_{c \in \clients} x_{cf} &  \forall f \in \solution, \forall g \in \{ 1,\ldots, \groups \}\label{LP_const4}\\
		& \sum_{c:c \in \clients_g} x_{cf} \geq \fairtwo_g \sum_{c \in \clients} x_{cf} &  \forall f \in \solution, \forall g \in \{ 1,\ldots, \groups \}\label{LP_const5}\\
		& x_{cf},z_{c} \geq 0 & \forall c \in \clients, \forall f \in \solution \label{LP_const6} \\
		& y_{fg} \in \nat & \forall c \in \clients , \forall g \in [\groups] \label{LP_const7}
	\end{eqnarray}
	where $\cost(x,y,z) = \sum_{c\in \clients} \sum_{f \in \facilities} \dist(c,f) x_{cf}$ in the case of $k$-Median and $\cost(x,y,z) = \sum_{c\in \clients} \sum_{f \in \facilities} \dist(c,f)^2 x_{cf}$ in the case of $k$-Means. Constraints~\ref{LP_const1} ensures that for any client $c$, the total of weight assigned to facilities in $\solution$ and the weight that is outlier is exactly equal to the weight of $c$. Constraints~\ref{LP_const2} ensures the bound on number of outliers for every group $g \in [\groups]$. Constraints~\ref{LP_const4} and~\ref{LP_const5} are fairness constraints.
	
	We solve the above Mixed-Integer Linear Program using the following proposition.
	\begin{proposition}[\cite{Lenstra,kannan1987,Frank1987}]
		There is an algorithm solving Mixed-Integer Linear Programming in time $O(a^{2.5a+o(p)}b^4L)$ and space polynomial in $L$, where $a$ is the number of integer variables, $b$ is the number of non-integer variables and $L$ is the bitsize of the given instance.
	\end{proposition}
	Note that, in our Mixed-Intger Linear Program, $a=k\groups$ and $b=n \outliers$. Therefore, in time $(k\groups)^{O(k \groups)}(n \outliers)^{O(1)} L$ we find the optimal solution $\{x_{cf}\}, \{y_{fg}\}, \{z_c\}$.
	
	$\{ y_{fg}\}$ are integral but $\{x_{cf}\}$ and $\{ z_c\}$ might be fractional. We next show that the integrality of $\{y_{fg}\}$ can be used to find another optimal solution that is integral in $\{x_{cf}\}$ and $\{ z_c\}$ also. For every group $g \in [\groups]$, consider the following network follow: for every client $c \in \clients_g$ there is a supply node with $\wt(c)$ supply and for every facility $f \in \solution$ we have a demand node with demand $y_{fg}$. We introduce edge between each client $c \in \clients_g$ and each facility $f \in \solution$ with unlimited capacity and $\dist(c,f)$ cost. Since all the demands and supplies are integral, there exist a maximum flow of minimum cost (say $\{x'_{cf}\}$) that is integral and this flow can be computed by using \mcf in polynomial time. Replacing $\{x_{cf}\}$ with $\{x'_{cf}\}$ gives us another optimal solution where $\{x_{cf}\}$ and $\{y_{fg}\}$ are integral. Observe that, once we have $\{x_{cf}\}$ to be integral, $\{z_c\}$ are also integral because $\wt(c)$ are integer and $z_c = \wt(c)-\sum_{f \in \solution}x_{cf}$ fro all $c \in \clients$. Once, we repeat the above procedure for every group $g \in [\groups]$, the solution is completely integral. 
\end{proof}
%!TEX root = main.tex
\section{Conclusion} \label{sec:conclusion}

In this paper, we further demonstrated the power of ring sampling approach by showing that it is versatile enough to handle multiple constraints in a clustering problem. In particular, we designed FPT approximations for \ckmo. We further generalized the arguments to handle outliers while maintaining ($\fairone,\fairtwo$)-fairness. There are several questions left open from our work: which other simultaneous orthogonal constraints can we handle at the same time? Can we extend this approach to other kinds of clustering objectives, such as sum-of-radii clustering? Perhaps, more closely related to result in this paper, another open question is whether one can improve the approximation guarantee for \ckmo (even for a special case when the capacities are uniform)? One way to do is to improve the approximation for \ckm.
\bibliography{ref}
\end{document}